\documentclass[11pt,a4paper]{article}
\usepackage{jheppub}
\usepackage[T1]{fontenc}
\usepackage{lmodern}
\usepackage{amsmath,amssymb,amsthm,mathtools}
\usepackage{booktabs}
\usepackage{array}
\usepackage{microtype}
\usepackage{tikz}
\usetikzlibrary{arrows.meta,positioning}

\makeatletter
\def\@fpheader{\relax}
\makeatother

\hypersetup{
pdftitle={A computer assisted existence proof for a non-Schwarzschild black hole in Einstein--Weyl gravity}, pdfauthor={Kevin Goldstein and Vishnu Jejjala}, pdfsubject={Computer assisted existence proof in Einstein--Weyl gravity}, pdfkeywords={Black Holes, Classical Theories of Gravity} }

\let\OriginalTableOfContents\tableofcontents
\renewcommand{\tableofcontents}{{\small\renewcommand{\baselinestretch}{1}\selectfont\enlargethispage{2\baselineskip}\OriginalTableOfContents}}

\renewcommand{\afterTocSpace}{\smallskip}
\renewcommand{\afterTocRuleSpace}{\clearpage}

\allowdisplaybreaks

\newtheorem{theorem}{Theorem}[section]
\newtheorem{proposition}[theorem]{Proposition}
\newtheorem{lemma}[theorem]{Lemma}
\newtheorem{corollary}[theorem]{Corollary}
\theoremstyle{definition}

\theoremstyle{remark}

\newcommand{\dd}{\mathrm{d}}

\newcommand{\cE}{\mathcal{E}}
\newcommand{\cN}{\mathcal{N}}
\newcommand{\cT}{\mathcal{T}}
\newcommand{\RR}{\mathbb{R}}
\newcommand{\norm}[1]{\left\lVert #1\right\rVert}
\newcommand{\abs}[1]{\left\lvert #1\right\rvert}
\newcommand{\e}{\mathrm{e}}
\newcommand{\Ai}{A_{\infty}}
\newcommand{\Aizero}{A_{\infty,0}}

\newcommand{\PublicRepositoryURL}{https://github.com/kevin-goldstein/einstein_weyl}

\newcommand{\CertifiedBLower}{0.3633018786279300967082111952175988891934060723442491092}
\newcommand{\CertifiedBUpper}{0.3633018786279300967082111952175988891934060823442491093}
\newcommand{\CertifiedRLower}{0.6972243957177816041843153424327653277909997011594910385}
\newcommand{\CertifiedRUpper}{0.6972243957177816041843153424337653277909997011594910386}
\newcommand{\CertifiedMassLower}{0.2755350525036245596846325642399697}
\newcommand{\CertifiedMassUpper}{0.2755350525036245596846325661930532}
\newcommand{\CertifiedTemperatureLower}{0.1178424927965176185317560526720768}
\newcommand{\CertifiedTemperatureUpper}{0.1178424927965176185317560526790940}
\newcommand{\CertifiedEntropyLower}{-5.42428567945554504636589313312624554368829707376709}
\newcommand{\CertifiedEntropyUpper}{-5.42428567945554504636589313312624554368829701093523}

\title{A computer assisted existence proof for a non-Schwarzschild black hole in Einstein--Weyl gravity}

\author{Kevin Goldstein$^{*,\dagger}$, Vishnu Jejjala$^{*,\dagger}$}

\affiliation[*]{Mandelstam Institute for Theoretical Physics, School of Physics, University of the Witwatersrand, Johannesburg 2050, South Africa}
\affiliation[\dagger]{National Institute for Theoretical and Computational Sciences, Gauteng, South Africa}

\emailAdd{kevin.goldstein@wits.ac.za}
\emailAdd{v.jejjala@wits.ac.za}

\abstract{Static, asymptotically flat, non-Schwarzschild black holes in four-dimensional Einstein--Weyl gravity have previously been constructed by numerical shooting, while the local field equations admit horizon and asymptotic expansions to all orders.
We give a computer assisted existence proof for a non-Schwarzschild solution whose dimensionless horizon radius is one in units of the massive spin-two scale, with parameters close to the previously reported numerical branch.
The exterior is divided into a convergent horizon series domain, a compact core enclosed by a parameter dependent Taylor model with directed rounding, and an infinite asymptotic domain treated as a centered stable manifold fixed point problem.
The resulting five-dimensional matching map between the core and tail is continuous on a rigorously specified box.
After exact rational preconditioning, the signs on opposite faces satisfy the hypotheses of the Poincar\'e--Miranda theorem and therefore certify a matching zero.
The solution has a regular nondegenerate horizon, no additional exterior horizon, vanishing Ricci scalar, and nonvanishing Ricci tensor.
The proof establishes existence, but not uniqueness of the matching zero.}

\keywords{Black Holes, Classical Theories of Gravity}

\begin{document}
\maketitle
\flushbottom

\parskip=.33\baselineskip
\renewcommand\baselinestretch{1.15}

\section{Introduction}
\label{sec:introduction}

Quadratic curvature terms are natural both in perturbatively renormalizable models of gravity and in low energy effective actions descended from ultraviolet completions~\cite{Stelle:1977ry,Stelle:1978}.
They change the classical solution space because the metric field equations are fourth order and propagate, in addition to the massless graviton, massive modes whose static fields have Yukawa behavior.
A particularly economical four-dimensional model is Einstein--Weyl gravity, in which the Einstein--Hilbert action is supplemented by the square of the Weyl tensor.
Because $m_2\bar r_h=1$ probes the higher derivative scale, the theorem concerns the stated classical Einstein--Weyl equations and does not justify neglecting further curvature operators in a generic effective action~\cite{Goldstein:2017rxu}.

The static and spherically symmetric vacuum problem in this theory has a branch that is not Ricci-flat.
L\"u, Perkins, Pope, and Stelle demonstrated this branch by numerical shooting and showed that every solution in the relevant asymptotically flat sector has vanishing Ricci scalar~\cite{Lu:2015cqa,Lu:2015psa}.
The branch meets the Schwarzschild family at a static negative mode of the Lichnerowicz operator~\cite{Lu:2017kzi}.
Subsequent work developed asymptotic transseries, accurate continued fraction representations, and exact local series to all orders~\cite{Goldstein:2017rxu,Kokkotas:2017zwt,Podolsky:2018pfe,Podolsky:2019gro}.
In particular, Podolsk\'y, \v{S}varc, Pravda, and Pravdov\'a use a metric conformal to Kundt geometry to reduce the problem to two compact ordinary differential equations and identify a dimensionless horizon Bach parameter $b$~\cite{Podolsky:2019gro}.
The value $b=0$ gives Schwarzschild, whereas $b\neq0$ gives a nonvanishing Bach tensor at the horizon.

The remaining analytic difficulty is global.
A generic regular horizon solution contains a massive mode that grows like $\e^{+m_{2}\bar r}$ and destroys asymptotic flatness.
The asymptotically flat solution is selected by suppressing that mode to all orders.
At the same time, the decaying mode is beyond all algebraic orders in $1/\bar r$, so an ordinary Frobenius expansion at spatial infinity sees only Schwarzschild~\cite{Goldstein:2017rxu,Podolsky:2019gro}.
This combination makes direct outward shooting exponentially sensitive to initial data and prevents the local series, by themselves, from proving that a regular horizon connects to an asymptotically flat end.

The purpose of this paper is to close that global gap for one black hole by a computer assisted proof.
The construction has four components.
First, an exact majorant proves convergence of the horizon recurrence and controls its derivatives.
Second, a centered Taylor model with outward rounding encloses the compact part of the exterior together with its dependence on the two horizon side parameters $(b,R)$.
Third, a generalized Yukawa expansion at infinity is corrected by a Banach fixed point argument in a weighted function space.
Fourth, exact rational bounds certify the signs of a preconditioned five-dimensional matching map on opposite faces of a small box, so the Poincar\'e--Miranda theorem supplies a zero~\cite{Miranda:1940,Kulpa:1997}.
The logical structure of the proof is worth emphasizing.
The horizon majorant gives a rigorous local analytic solution.
The validated Taylor model flow then gives existence and uniqueness of the compact core evolution for each admissible horizon parameter.
Independently, the stable manifold fixed point theorem gives a unique tail in the specified weighted ball for each admissible set of asymptotic parameters.
The final Poincar\'e--Miranda argument connects these two constructions by proving the existence of a parameter tuple for which the core and tail agree.
This separation distinguishes the analytic existence statements from the validated numerical inequalities used to complete the global connection problem.
The use of validated Taylor methods follows the established framework for rigorous initial value integration and for reducing dependency and wrapping~\cite{Nedialkov:1999,Berz:1998,Neher:2007}.
For related variational flow validation and a posteriori contraction methods, see~\cite{Zgliczynski:2002,vandenBerg:2015}.
Computer assisted arguments have also proved global existence statements in nonlinear general relativity in substantially more singular settings~\cite{Reiterer:2019}.

Our main result is an existence theorem for a static, spherical, asymptotically flat Einstein--Weyl black hole with
\begin{equation}
 m_{2}\bar r_{h}=1\,.
 \label{eq:rhoh-one-intro}
\end{equation}
The proof encloses its Bach parameter and compactified endpoint in the intervals
\begin{equation}
\begin{split}
 b\in{}&
 \big[
 \CertifiedBLower\,,
 \\[-1mm]
 &\hspace{16mm}
 \CertifiedBUpper
 \big]\,,
 \label{eq:b-box-intro}
\end{split}
\end{equation}
\begin{equation}
\begin{split}
 R\in{}&
 \big[
 \CertifiedRLower\,,
 \\[-1mm]
 &\hspace{16mm}
 \CertifiedRUpper
 \big]\,.
 \label{eq:R-box-intro}
\end{split}
\end{equation}
Here $R$ is the finite value of the horizon expansion coordinate at spatial infinity.
The small radii reflect exponential amplification in the core: derivatives of the endpoint with respect to $b$ are of order $10^{14}$, while the tail tolerance is of order $10^{-30}$.
They are numerical conditioning requirements, not observational precision or a uniqueness criterion.
The proof also establishes positivity of both metric functions throughout the exterior, so there is no additional exterior Killing horizon.
Because the interval~\eqref{eq:b-box-intro} excludes zero, the solution has nonzero Bach tensor and is not Ricci-flat.

All numerical bounds used in the compact core are computed with directed rounding, including error enclosures for the approximate stored centers, and every subsequent matching inequality is checked with exact rational arithmetic.
The resulting metric is represented by a convergent local series, a rigorously enclosed ODE orbit, and a stable asymptotic fixed point joined at a certified parameter value.
Section~\ref{sec:scope} explains the distinction between this constructive representation and a closed form metric, and the additional estimates that would be needed for uniqueness.

The paper is organized as follows.
Section~\ref{sec:equations} reviews the Einstein--Weyl equations and the Schwarzschild--Bach horizon expansion.
Section~\ref{sec:infinity} regularizes spatial infinity and derives the conserved constraint and the stable massive sector.
Section~\ref{sec:proof} gives the computer assisted proof.
Section~\ref{sec:geometry} extracts the global geometry and physical parameters.
Section~\ref{sec:cold-endpoint} clarifies the scope of possible cold limits and proves a local obstruction to a regular extremal horizon of finite area.
Section~\ref{sec:discussion} summarizes the result, its limitations, and possible extensions to branch continuation and stability.
The appendices record the recurrence, majorant, core and tail estimates, matching certificate, and reproducibility information.

\section{Einstein--Weyl equations and the horizon expansion}
\label{sec:equations}

\subsection{Action and spherical reduction}

We consider the action
\begin{equation}
 S[g]
 =
 \frac{1}{16\pi G}
 \int \dd^{4}x\,\sqrt{-g}
 \left(
 R-k C_{\mu\nu\rho\sigma}C^{\mu\nu\rho\sigma}
 \right)\,,
 \qquad
 k>0\,.
 \label{eq:action}
\end{equation}
With the Bach tensor convention
\begin{equation}
 B_{\mu\nu}
 =
 \left(
 \nabla^{\rho}\nabla^{\sigma}
 +\frac{1}{2}R^{\rho\sigma}
 \right)
 C_{\mu\rho\nu\sigma}\,,
 \label{eq:bach-definition}
\end{equation}
the vacuum equation is
\begin{equation}
 G_{\mu\nu}-4kB_{\mu\nu}=0\,.
 \label{eq:field-equation}
\end{equation}
The Bach tensor is traceless, and therefore
\begin{equation}
 R=0\,.
 \label{eq:R-zero}
\end{equation}
Equation~\eqref{eq:field-equation} then becomes
\begin{equation}
 R_{\mu\nu}=4kB_{\mu\nu}\,.
 \label{eq:Ricci-Bach}
\end{equation}
We define the massive spin-two scale by
\begin{equation}
 m_{2}^{2}=\frac{1}{2k}\,.
 \label{eq:m2}
\end{equation}

Following the conformal construction of~\cite{Pravda:2017} and its spherical specialization in~\cite{Podolsky:2018pfe,Podolsky:2019gro}, we use the following ansatz conformal to Kundt geometry:
\begin{equation}
 \dd s^{2}
 =
 \Omega(r)^{2}
 \left[
 \dd\Omega_{2}^{2}
 -2\,\dd u\,\dd r
 +\mathcal{H}(r)\,\dd u^{2}
 \right]\,.
 \label{eq:kundt-metric}
\end{equation}
Primes in this subsection denote differentiation with respect to $r$.
The complete spherical vacuum system reduces to
\begin{equation}
 \Omega\Omega''-2\Omega'^{2}
 =
 \frac{k}{3}\mathcal{H}''''\,,
 \label{eq:kundt-one}
\end{equation}
\begin{equation}
 \Omega\Omega'\mathcal{H}'
 +3\Omega'^{2}\mathcal{H}
 +\Omega^{2}
 =
 \frac{k}{3}
 \left(
 \mathcal{H}'\mathcal{H}'''
 -\frac{1}{2}\mathcal{H}''^{2}
 +2
 \right)\,.
 \label{eq:kundt-two}
\end{equation}
The trace equation is
\begin{equation}
 \mathcal{H}\Omega''
 +\mathcal{H}'\Omega'
 +\frac{1}{6}
 \left(
 \mathcal{H}''+2
 \right)
 \Omega
 =0\,.
 \label{eq:kundt-trace}
\end{equation}
Only two of~\eqref{eq:kundt-one}--\eqref{eq:kundt-trace} are independent, but retaining the remaining equation as a constraint will be useful.

\subsection{Dimensionless horizon variables}

Let $z=0$ denote a nondegenerate horizon and normalize the dimensionless areal radius to
\begin{equation}
 \rho=m_{2}\bar r=W(z)\,,
 \qquad
 W(0)=\rho_{h}=1\,.
 \label{eq:rho-W}
\end{equation}
Fix the Kundt coordinate freedom by $r_h=-1/\bar r_h=-m_2$ and set
\begin{equation}
 z=1-\frac{r}{r_h}\,,\qquad r=m_2(z-1)\,,\qquad
 \Omega=\frac{W}{m_2}\,,\qquad \mathcal H=-m_2^2zQ\,.
 \label{eq:coordinate-dictionary}
\end{equation}
With $\upsilon=m_2u$, the regular metric is
\begin{equation}
 m_2^2\dd s^2=W^2\bigl[\dd\Omega_2^2-2\dd\upsilon\,\dd z-zQ\dd\upsilon^2\bigr]\,.
 \label{eq:dimensionless-kundt}
\end{equation}
Thus the dimensionless reduced equations use $\Omega=W$, $\mathcal H=-zQ$, and coupling $\kappa=km_2^2=1/2$.
These identifications refer to the rescaled metric in~\eqref{eq:dimensionless-kundt}; the physical quantities are given by~\eqref{eq:coordinate-dictionary}.
The exact horizon family can be written as
\begin{equation}
 W(z)
 =
 \frac{1}{1-z}
 +b\sum_{n=1}^{\infty}\alpha_{n}z^{n}\,,
 \label{eq:W-horizon}
\end{equation}
\begin{equation}
 Q(z)
 =
 (1-z)^{2}
 +3b\sum_{n=1}^{\infty}(-1)^{n}\gamma_{n}z^{n}\,.
 \label{eq:Q-horizon}
\end{equation}
The coefficient $b$ is the dimensionless Bach parameter at the horizon.
For $b=0$,~\eqref{eq:W-horizon} and~\eqref{eq:Q-horizon} give the Schwarzschild solution.
For $b\neq0$, the Bach invariant at the horizon is nonzero~\cite{Podolsky:2019gro}.

In the horizon time normalization, $\tau=u-\int\mathcal H^{-1}\dd r$ has units of length and the areal coordinate metric is
\begin{equation}
 \dd s^{2}
 =
 -\widehat h(\rho)\,\dd\tau^{2}
 +\frac{\dd\rho^{2}}{m_2^2 f(\rho)}
 +\frac{\rho^{2}}{m_{2}^{2}}\dd\Omega_{2}^{2}\,,
 \label{eq:areal-metric}
\end{equation}
where
\begin{equation}
 \widehat h=zQW^{2}\,,
 \qquad
 f=zQ\left(\frac{W_{z}}{W}\right)^{2}\,.
 \label{eq:metric-functions-horizon}
\end{equation}
The coefficients in~\eqref{eq:W-horizon} and~\eqref{eq:Q-horizon} satisfy the exact triangular recurrence recorded in Appendix~\ref{app:recurrence}.
The first coefficients are
\begin{equation}
 \alpha_{0}=0\,,
 \qquad
 \alpha_{1}=1\,,
 \qquad
 \gamma_{1}=1\,,
 \qquad
 \gamma_{2}=1+b\,,
 \label{eq:first-coefficients}
\end{equation}
where we used $\rho_{h}=1$.
In particular,
\begin{equation}
 W_{z}(0)=1+b\,,
 \qquad
 Q(0)=1\,.
 \label{eq:horizon-slopes}
\end{equation}
The intervals in Theorem~\ref{thm:main} imply $W_{z}(0)>0$, so both metric functions have a simple zero at the horizon.

\subsection{Statement of the main theorem}

\begin{theorem}[Computer assisted existence]
\label{thm:main}
Consider Einstein--Weyl gravity defined by~\eqref{eq:action}, with $k>0$ and $m_{2}^{2}=1/(2k)$.
There exists a static and spherically symmetric vacuum solution with $m_{2}\bar r_{h}=1$ whose horizon Bach parameter lies in the interval~\eqref{eq:b-box-intro} and whose compactified endpoint lies in the interval~\eqref{eq:R-box-intro}.
The certified exterior admits an analytic ingoing extension across a nondegenerate Killing horizon, which is the future event horizon relative to the chosen asymptotically flat end in the spacetime obtained by adjoining a sufficiently small interior collar.
The exterior is smooth and asymptotically flat after a constant normalization of the time coordinate.
The exterior metric functions obey
\begin{equation}
 h(\rho)>0\,,
 \qquad
 f(\rho)>0\,,
 \qquad
 \rho>1\,,
 \label{eq:metric-positivity-theorem}
\end{equation}
so there is no additional exterior Killing horizon.
Moreover,
\begin{equation}
 R=0\,,
 \qquad
 R_{\mu\nu}\neq0\,.
 \label{eq:curvature-theorem}
\end{equation}
The theorem asserts existence of at least one solution in the parameter box and does not assert uniqueness of the matching zero.
\end{theorem}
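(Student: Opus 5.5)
The plan follows the four-component architecture described in the introduction. The proof splits the exterior into three domains, each handled by an independent analytic statement, and then closes the global connection problem with a finite-dimensional topological argument.

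\emph{Horizon domain.} Write the recurrence of Appendix~\ref{app:recurrence} for $(\alpha_n,\gamma_n)$ with $\rho_h=1$. Construct an explicit geometric majorant $|\alpha_n|,|\gamma_n|\le C\theta^{-n}$, valid uniformly for $b$ in a slightly enlarged box. This is proved by induction: the quadratic nonlinearities in \eqref{eq:kundt-one}--\eqref{eq:kundt-two} become Cauchy products, and these are dominated once $C$ and $\theta$ satisfy a closed scalar inequality that is checked in rational arithmetic. The majorant shows that \eqref{eq:W-horizon}--\eqref{eq:Q-horizon} converge on $|z|<\theta$. They therefore define a real-analytic solution with $Q(0)=1$ and $W_z(0)=1+b>0$, which is the nondegenerate horizon. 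Because the series converges for negative $z$ as well, it also provides the analytic ingoing extension and the interior collar. Cauchy estimates then bound the truncation tails and the $b$-derivatives of the data at some $z_0<\theta$.

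\emph{Core domain.} Rewrite the system as a first-order ODE in a compactified coordinate, with $b$ as a parameter. Starting from the enclosed data at $z_0$, integrate it with a centered Taylor model in $b$ under directed rounding. The Picard-operator inclusion in each step gives existence and uniqueness of the flow, and it also produces enclosures of the state at a matching point. The main obstacle lies here. The growing Yukawa mode amplifies $\partial_b$ by roughly $10^{14}$, so the initial $b$-box must be extremely thin. Wrapping also has to be suppressed, and I would keep the linear part of the Taylor model in a QR-type moving frame for this. In the same sweep the integrator certifies that $W>0$, $W_z>0$ and $Q>0$ on the core. Together with the horizon series, this yields $h,f>0$ for $1<\rho$ up to the matching radius.

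\emph{Tail domain.} Section~\ref{sec:infinity} supplies a regular coordinate at spatial infinity, the conserved constraint, and the stable massive sector. Parametrize the asymptotically flat solutions near $R$ by a Schwarzschild mass, the Yukawa amplitude, $R$, and the time normalization. Write the solution as an explicit generalized Yukawa ansatz plus a remainder lying in a weighted space that decays like $\e^{-2m_2\bar r}$ times a power. Solve for the remainder with the variation-of-constants operator that integrates the decaying modes from infinity. Contraction and self-mapping on a ball of radius about $10^{-30}$ are verified by explicit inequalities, and the Banach fixed point theorem then gives a unique tail that depends continuously on the parameters. Positivity of $h$ and $f$ is immediate there because the Schwarzschild part dominates.

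\emph{Matching and conclusions.} Define a continuous map $F:\RR^5\to\RR^5$ as the core state minus the tail state at the matching radius, with the constraint used to remove redundancy. Take an exact rational approximate inverse $A$ of the Jacobian at the numerical zero, and check in rational arithmetic that each component of $AF$ has opposite signs on the opposite faces of the box. The Poincar\'e--Miranda theorem then gives a zero, which is a global solution. It is smooth on the exterior, and after rescaling $\tau$ it is asymptotically flat. The horizon is the future event horizon because $h,f>0$ outside it and the outgoing null rays escape to infinity. $R=0$ follows from \eqref{eq:R-zero}. Finally, $R_{\mu\nu}=4kB_{\mu\nu}\neq0$ because $b\neq0$, and by \cite{Podolsky:2019gro} the Bach tensor is then nonzero at the horizon.
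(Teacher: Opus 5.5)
Your architecture matches the paper's: convergent horizon series, validated core, stable-manifold tail, preconditioned Poincar\'e--Miranda. However, two analytic steps fail as you state them.

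\emph{The tail weight.} An explicit generalized Yukawa ansatz is in practice a finite truncation of \eqref{eq:generalized-Yukawa}, since no convergence of the full series is known. Its defect therefore has a part linear in the amplitude that decays only at the Yukawa rate, like $\eta\,\e^{-at}t^{\nu-N-1}$; compare \eqref{eq:RF-bound}--\eqref{eq:RJ-bound}. Freezing $a,\nu$ at box-centre values adds further defects at the same rate. In a norm weighted by $\e^{-2m_2\bar r}\sim\e^{-2at}$ this defect has infinite norm, so no self-map estimate can hold. If you weaken the weight to $\e^{-\sigma t}$ with $\sigma$ below the Yukawa rate, as the paper does with $\sigma=0.6<a_0\approx0.97$, the decaying homogeneous mode lies in the space. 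The kernel $\e^{a(s-t)}$ is also no longer integrable against $\e^{-\sigma s}$, so you cannot integrate the stable component from infinity. It must be integrated forward from $T$ with its amplitude pinned there, $v_0(T)=\eta$, as in \eqref{eq:tail-integral-v}; only the growing and massless components come from infinity. Alternatively, you can keep integration from infinity with a weight $\e^{-at}t^{\nu-N}$ built from the exact, parameter-dependent exponents.

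\emph{The horizon majorant.} A purely geometric majorant $C\theta^{-n}$ cannot close. In \eqref{eq:direct-w-recurrence} the convolution $(n+1)\sum_k(n+1-k)q_kw_{n+1-k}$ carries a weight of degree three against $(n+1)^2$ on the left, so the induction ratio grows like $nC$. You need polynomially damped majorants such as $10^n/(n+1)^2$ and $5\cdot10^n/(n+1)^4$, whose convolutions stay uniformly bounded.

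\emph{From a matching zero to a vacuum solution.} This step is the second gap; ``using the constraint to remove redundancy'' is not yet an argument. The first-order system encodes only two of the three reduced equations, and matching five of six state components says nothing by itself about the sixth. You need three facts.
\begin{itemize}
\item $\mathcal N=0$ on the core. At the horizon it holds because the prescribed $\gamma_2$ gives $\mathcal E_h(0)=0$; \eqref{eq:horizon-constraint-identity} keeps it zero on the series, and Lemma~\ref{lem:constraint} propagates it.
\item $\mathcal N=0$ on the tail. This needs exponential decay of $p,d,Y,J$, not mere decay, because $\mathcal N$ contains polynomially weighted terms such as $t^3Ap$ (Corollary~\ref{cor:constraint}).
\item $\mathcal N$ is strictly monotone in $Y$ at fixed $(A,p,C,d,v_0)$ on a box containing both endpoint enclosures, which is \eqref{eq:constraint-monotonicity}. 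This monotonicity is what forces $Y$, and then $J=v_0+a_0Y$, to agree.
\end{itemize}

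Two smaller points. If your core runs in the paper's compactified $t$, it depends on $R$ through its start at $z=R/20$, while the $t$-form tail equations contain no $R$; so $R$ belongs on the core side, not the tail. Also, the chart \eqref{eq:dimensionless-kundt} continues through the past horizon, so the future event horizon claim needs the ingoing coordinate $v=\upsilon+2r_*$. You must also show that no future causal curve leaves the collar $z<0$, not merely that outgoing rays escape.
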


The proof occupies Section~\ref{sec:proof}.
Before giving it, we transform the irregular endpoint at spatial infinity into a form suitable for a stable manifold argument.

\section{Regularizing spatial infinity}
\label{sec:infinity}

\subsection{Removing the pole and the double zero}

Let the horizon coordinate reach spatial infinity at the finite value $z=R$.
Define
\begin{equation}
 x=\frac{z}{R}\,,
 \qquad
 s=1-x\,,
 \qquad
 W=\frac{1}{sA(x)}\,,
 \qquad
 Q=Rs^{2}B(x)\,.
 \label{eq:regular-variables}
\end{equation}
The required simple pole of $W$ and double zero of $Q$ are explicit in~\eqref{eq:regular-variables}.
Spatial infinity is $x=1$.
The first field equation and the trace equation become
\begin{equation}
\begin{split}
 sA''-2A'
 ={}&
 \frac{\kappa}{3}s^{3}A^{3}
 \left[
 xs^{2}B''''
 +4s(1-3x)B'''
 \right.
 \\
 &\left.
 \hspace{22mm}
 +12(3x-2)B''
 +24B'
 \right]\,,
 \label{eq:x-first}
\end{split}
\end{equation}
\begin{equation}
\begin{split}
 0={}&
 A^{2}
 \left[
 xs^{2}B''
 +2sB'
 +2(B-1)
 \right]
 \\
 &-6s^{2}AB(xA''+A')
 -6xs^{2}AA'B'
 +12xs^{2}B(A')^{2}\,,
 \label{eq:x-trace}
\end{split}
\end{equation}
where primes in~\eqref{eq:x-first} and~\eqref{eq:x-trace} denote $x$ derivatives and
\begin{equation}
 \kappa=\frac{1}{2\rho_{h}^{2}}=\frac{1}{2}\,.
 \label{eq:kappa}
\end{equation}

The final endpoint regularization is
\begin{equation}
 t=\frac{1}{1-x}\,,
 \qquad
 B(x)=\frac{t+C(t)}{t-1}\,.
 \label{eq:t-C}
\end{equation}
The horizon is $t=1$, and spatial infinity is $t=+\infty$.
Equations~\eqref{eq:x-first} and~\eqref{eq:x-trace} reduce exactly to
\begin{equation}
 tA''
 =
 \frac{\kappa}{3}A^{3}C''''\,,
 \label{eq:t-first}
\end{equation}
\begin{equation}
\begin{split}
 0={}&
 -6t^{2}AA''
 +12t^{2}(A')^{2}
 +tA^{2}C''
 -6tACA''
 \\
 &-6tAA'C'
 -12tAA'
 +12tC(A')^{2}
 +2A^{2}C'
 -6ACA'\,.
 \label{eq:t-trace}
\end{split}
\end{equation}
Primes henceforth denote $t$ derivatives unless stated otherwise.

Set
\begin{equation}
 p=A'\,,
 \qquad
 d=C'\,,
 \qquad
 Y=C''\,,
 \qquad
 J=C'''\,.
 \label{eq:first-order-variables}
\end{equation}
Solving~\eqref{eq:t-trace} for $A''$ gives
\begin{equation}
\begin{split}
 F(A,p,C,d,Y;t)
 :={}&A''
 \\
 ={}&
 \frac{2p^{2}}{A}
 +\frac{AY}{6(t+C)}
 +\frac{Ad}{3t(t+C)}
 -\frac{pd}{t+C}
 \\
 &-\frac{Cp}{t(t+C)}
 -\frac{2p}{t+C}\,.
 \label{eq:F}
\end{split}
\end{equation}
Equation~\eqref{eq:t-first} then gives
\begin{equation}
 J'
 =
 \frac{3t}{\kappa A^{3}}F\,.
 \label{eq:J-prime}
\end{equation}
Together with
\begin{equation}
 A'=p\,,
 \qquad
 C'=d\,,
 \qquad
 d'=Y\,,
 \qquad
 Y'=J\,,
 \label{eq:first-order-rest}
\end{equation}
\eqref{eq:F}--\eqref{eq:first-order-rest} form the six-dimensional core system used in the validated integration.

\subsection{The conserved field equation constraint}

Using only~\eqref{eq:t-first} and~\eqref{eq:t-trace} would be insufficient unless the remaining independent Einstein--Weyl equation were controlled.
The needed control is algebraic.

\begin{lemma}[Constraint propagation]
\label{lem:constraint}
Define
\begin{equation}
 \cE_{2}=-\frac{\cN}{6A^{4}}\,,
 \label{eq:E2-constraint}
\end{equation}
where
\begin{equation}
\begin{split}
 \cN={}&
 2\kappa t^{2}A^{4}dJ
 -\kappa t^{2}A^{4}Y^{2}
 -4\kappa t^{2}A^{4}J
 -6\kappa tA^{4}CJ
 \\
 &+2\kappa tA^{4}dY
 +8\kappa tA^{4}Y
 +6\kappa A^{4}CY
 -4\kappa A^{4}d^{2}
 -8\kappa A^{4}d
 \\
 &+18t^{4}p^{2}
 -6t^{3}Apd
 -24t^{3}Ap
 +18t^{3}Cp^{2}
 +6t^{2}A^{2}d
 -18t^{2}ACp\,.
 \label{eq:N-constraint}
\end{split}
\end{equation}
Along every solution of~\eqref{eq:F},~\eqref{eq:J-prime}, and~\eqref{eq:first-order-rest},
\begin{equation}
 \cN'=4\frac{p}{A}\cN\,,
 \label{eq:N-propagation}
\end{equation}
and hence
\begin{equation}
 \cE_{2}'=0\,.
 \label{eq:E2-conserved}
\end{equation}
The condition $\cE_{2}=0$ is~\eqref{eq:kundt-two}, expressed in the dimensionless variables after use of the first equation and the trace equation.
Its normalization is fixed by~\eqref{eq:E2-constraint}.
\end{lemma}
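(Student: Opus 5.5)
The plan is a direct symbolic verification of \eqref{eq:N-propagation}, with the conservation law and the identification with \eqref{eq:kundt-two} as corollaries.

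\emph{Step 1: differentiate $\cN$.} Regard $\cN$ in \eqref{eq:N-constraint} as a polynomial in $(t,A,p,C,d,Y,J)$ with coefficients depending on $\kappa$. Its total derivative along the flow is
\begin{equation*}
 \cN'=\partial_t\cN+p\,\partial_A\cN+F\,\partial_p\cN+d\,\partial_C\cN+Y\,\partial_d\cN+J\,\partial_Y\cN+\frac{3t}{\kappa A^{3}}F\,\partial_J\cN .
\end{equation*}
Here $F$ is the rational expression \eqref{eq:F}, whose only denominators are $A$, $t$ and $t+C$.

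\emph{Step 2: reduce the identity to a polynomial check.} Multiply by the common denominator $A\,t\,(t+C)$. The claim $\cN'-4(p/A)\cN=0$ then becomes the vanishing of a single polynomial in the seven variables. I would verify this with exact rational arithmetic in a computer algebra system.

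By hand, the check would be organized by powers of $\kappa$. The $\partial_J\cN$ term carries a factor $1/\kappa$, and $\partial_J\cN$ itself is proportional to $\kappa A^{4}$. The factors of $\kappa$ therefore cancel, and that term yields a $\kappa$-independent contribution of size $3tA\,F\,(2t^{2}d-4t^{2}-6tC)$. This contribution must cancel the $F\,\partial_p\cN$ terms coming from the $\kappa$-free part of $\cN$. The remaining terms proportional to $\kappa$ involve only $d'=Y$, $Y'=J$ and $A'=p$. They should combine, using $4(p/A)\kappa A^{4}(\cdots)=\kappa\,(A^{4})'(\cdots)$, into $4(p/A)$ times the $\kappa$-part of $\cN$.

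The main obstacle is bookkeeping. I expect the $\kappa$-free part to close only after substituting $F$ and clearing the $t+C$ denominators, and a hand computation is error prone there. I would rely on symbolic expansion, checking the result coefficient by coefficient in monomials.

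\emph{Step 3: conservation.} Given \eqref{eq:N-propagation}, and since $(A^{-4})'=-4pA^{-5}$, we have
\begin{equation*}
 \cE_2'=-\frac{1}{6}\Bigl(\frac{\cN'}{A^{4}}-\frac{4p\,\cN}{A^{5}}\Bigr)=0 .
\end{equation*}
This holds on any interval where $A\neq0$, and in particular on the core domain, where $A>0$ is part of the enclosure.

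\emph{Step 4: identify $\cE_2=0$ with \eqref{eq:kundt-two}.} Substitute the chain of variable changes into \eqref{eq:kundt-two} with $\Omega=W$, $\mathcal H=-zQ$ and $\kappa=1/2$. The chain is \eqref{eq:coordinate-dictionary}, then \eqref{eq:regular-variables}, then \eqref{eq:t-C}.

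On substitution, \eqref{eq:kundt-two} contains $C''''$ and $A''$ only through $\mathcal H'''$ and lower derivatives. Hence, after multiplying by a suitable power of $s$ and the Jacobian factors, it becomes a polynomial in $(t,A,p,C,d,Y,J)$ together with possibly $A''$. Any $A''$ that appears is eliminated using the trace equation, i.e. using $F$.

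Comparing with \eqref{eq:N-constraint} should show that the result equals $\cN$ up to a nonvanishing factor. This step fixes the normalization \eqref{eq:E2-constraint}. It is again a finite symbolic check, and I would carry it out with the same exact arithmetic.
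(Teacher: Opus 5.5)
Your proposal is correct and is essentially the paper's proof. Both differentiate $\cN$ along \eqref{eq:F}, \eqref{eq:J-prime}, \eqref{eq:first-order-rest}, verify $\cN'=4(p/A)\cN$ by exact symbolic simplification, conclude $(A^{-4}\cN)'=0$, and identify $\cE_2$ with the residual of \eqref{eq:kundt-two} by direct substitution of $W=t/A$, $\mathcal H=-R^2(t+C)/t^3$, $z=R(1-1/t)$.

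Two of your hand heuristics are slightly off but harmless: the $J'$ contribution does not cancel $F\,\partial_p\cN$ but combines with it into $36t^2(t+C)(tp-A)F$, which is what clears the $t+C$ denominator. In Step 4 no $A''$ or $C''''$ ever arises, since \eqref{eq:kundt-two} involves only $\Omega'$ and $\mathcal H'''$, so the substitution gives exactly $-\cN/(6A^4)$ without invoking the trace equation.
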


\begin{proof}
Differentiate~\eqref{eq:N-constraint} and substitute the six first order equations.
Exact symbolic simplification gives~\eqref{eq:N-propagation}.
The derivative of $A^{-4}\cN$ therefore vanishes.
The ancillary symbolic check also substitutes $W=t/A$ and the coordinate transformation into the residual of~\eqref{eq:kundt-two} and verifies its equality to $-\cN/(6A^4)$.
\end{proof}

\begin{corollary}
\label{cor:constraint}
Suppose a solution of the first order system has $A\to\Ai>0$, $C\to\beta\in\RR$, and
\begin{equation}
 |p|+|d|+|Y|+|J|=O(\e^{-\sigma t})\,,\qquad \sigma>0\,,
 \label{eq:constraint-decay}
\end{equation}
and its denominators remain nonzero.
Then it satisfies the full reduced Einstein--Weyl equations.
\end{corollary}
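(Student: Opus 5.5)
The approach is to use the conservation law of Lemma~\ref{lem:constraint}: $\cE_2$ is constant along every solution of the six-dimensional system, so it suffices to show that its constant value is zero. We do this by evaluating the limit $t\to+\infty$. The first-order system encodes \eqref{eq:t-first} and \eqref{eq:t-trace} exactly, as long as the denominators $A$, $t$, $t+C$ are nonzero. Lemma~\ref{lem:constraint} states that $\cE_2=0$ is equivalent to the remaining equation \eqref{eq:kundt-two}, given the first equation and the trace equation. So once $\cE_2\equiv0$ is established, all three reduced equations hold, and the change of variables \eqref{eq:regular-variables}--\eqref{eq:t-C} translates them back to the reduced Einstein--Weyl system.

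The key step is to bound $\cN$ term by term in \eqref{eq:N-constraint}. By hypothesis, $A$ and $C$ are bounded, and $A$ is bounded away from zero for large $t$ because $A\to\Ai>0$. Every monomial in $\cN$ contains at least one factor from $\{p,d,Y,J\}$: the $\kappa$ terms contain $dJ$, $Y^2$, $J$, $CJ$, $dY$, $Y$, $CY$, $d^2$, or $d$, and the remaining terms contain $p^2$, $pd$, $p$, or $d$. Each monomial carries a prefactor that is at most polynomial in $t$, of degree at most $4$. Hence
\[
|\cN(t)|\le K\,(1+t)^4\,\e^{-\sigma t}\longrightarrow0 ,
\]
with $K$ depending only on bounds for $A$, $C$, and $\kappa$. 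Since $A^{-4}\to\Ai^{-4}$ is finite, $\cE_2=-\cN/(6A^4)\to0$. Because $\cE_2'=0$ by \eqref{eq:E2-conserved}, we conclude $\cE_2\equiv0$ on the whole interval of existence.

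We should also check that the conservation argument applies on the whole domain. The identity $\cN'=4(p/A)\cN$ holds wherever the vector field is defined, which is exactly where the denominators are nonzero, and that is assumed. Conservation therefore holds on the connected interval and propagates the vanishing at infinity back to the horizon end. The main obstacle is mild: we must make sure the polynomial growth in $t$ (up to $t^4p^2$ and $t^2A^4dJ$) is dominated by the exponential decay. This is immediate because $\sigma>0$. One only has to note that the quadratic terms decay even faster, like $\e^{-2\sigma t}$, so no cancellation is needed.
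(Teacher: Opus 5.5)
Your proof is correct and follows essentially the same route as the paper: every monomial of $\cN$ carries at least one factor from $p,d,Y,J$, times a bounded coefficient and a power of $t$ of degree at most $4$, so $\cN\to0$ and hence $\cE_2\to0$, and conservation of $\cE_2$ forces $\cE_2\equiv0$ on the whole interval. Your term-by-term check of the monomials, and your remark that the polynomial weights are why exponential decay (not mere decay) is needed, match the paper's reasoning.
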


\begin{proof}
Every term in~\eqref{eq:N-constraint} is a bounded coefficient times a polynomial in $t$ and at least one exponentially decaying derivative variable.
Hence $\cN\to0$ and $\cE_2\to0$.
Its conservation gives $\cE_2=0$, and therefore $\cN=0$, at every finite $t$.
Merely requiring the derivative variables to tend to zero would not suffice because of the powers of $t$ in $\cN$.
\end{proof}

\subsection{Stable and unstable massive modes}

Let
\begin{equation}
 A(t)\longrightarrow\Ai\,,
 \qquad
 C(t)\longrightarrow\beta\,,
 \qquad
 t\longrightarrow\infty\,,
 \label{eq:asymptotic-constants}
\end{equation}
and define
\begin{equation}
 a=\frac{1}{\Ai}\,.
 \label{eq:a-definition}
\end{equation}
The dimensionless areal radius is
\begin{equation}
 \rho=W=\frac{t}{A(t)}\sim at\,.
 \label{eq:rho-asymptotic}
\end{equation}
The linearized massive sector is diagonalized by
\begin{equation}
 u=J+aY\,,
 \qquad
 v=J-aY\,.
 \label{eq:uv-variable}
\end{equation}
Writing
\begin{equation}
 H
 =
 \frac{3t}{\kappa A^{3}}F-a^{2}Y\,,
 \label{eq:H}
\end{equation}
one obtains
\begin{equation}
 u'=au+H\,,
 \qquad
 v'=-av+H\,.
 \label{eq:uv-system}
\end{equation}
The condition that removes the growing mode is $u(\infty)=0$.
The decaying homogeneous mode is $v\sim\e^{-at}$.

The generalized Yukawa expansion has a power correction.
Consider the ansatz
\begin{equation}
 \e^{-at}t^{\nu}
 \sum_{n=0}^{\infty}\frac{X_{n}}{t^{n}}\,.
 \label{eq:generalized-Yukawa}
\end{equation}
Substitution into the linearized system gives the solvability condition
\begin{equation}
 \nu=\frac{\beta}{2\Ai}\,.
 \label{eq:nu}
\end{equation}
As shown below, the dimensionless mass parameter is (with $M=G M_{\rm ADM}$ the geometrized ADM mass)
\begin{equation}
 \mu=m_{2}M=-\frac{\beta}{2\Ai}\,,
 \label{eq:mu-definition}
\end{equation}
so the stable mode has the expected power $t^{-\mu}\e^{-at}$.

For the uniform tail certificate we use the exact rational centers supplied in \path{tail_shape_data.py}, with abbreviated values
\begin{equation}
\begin{split}
 \Aizero&\simeq
 1.0324864466507795973677035145514604182115606537271511328\,,
 \\
 \beta_{0}&\simeq
 -0.5689724145744066285521979945096614011267654811607729603\,,
 \label{eq:tail-centers}
\end{split}
\end{equation}
and choose the exact decimal rational approximants from the ancillary data,
\begin{equation}
 |a_0-\Aizero^{-1}|<10^{-178}\,,\qquad
 \left|\nu_0-\frac{\beta_0}{2\Aizero}\right|<10^{-178}\,.
 \label{eq:a0-nu0}
\end{equation}
The residual calculation includes these discrepancies.
The fixed matching coordinate is
\begin{equation}
 v_{0}=J-a_{0}Y\,.
 \label{eq:v0}
\end{equation}
Variations of $\Ai$ and $\beta$ away from their central values are included in the rigorous defect bounds rather than in the diagonalizing coordinate.

\subsection{Metric functions at infinity}

Equations~\eqref{eq:regular-variables} and~\eqref{eq:t-C} give
\begin{equation}
 \widehat h
 =
 \frac{R^{2}}{A^{2}}
 \left(1+\frac{C}{t}\right)\,,
 \label{eq:hhat-tail}
\end{equation}
\begin{equation}
 f
 =
 \left(1+\frac{C}{t}\right)
 \left(1-\frac{tp}{A}\right)^{2}\,.
 \label{eq:f-tail}
\end{equation}
The horizon normalized temporal function approaches $R^{2}/\Ai^{2}$.
After normalizing time at infinity,
\begin{equation}
 h
 =
 \frac{\Ai^{2}}{A^{2}}
 \left(1+\frac{C}{t}\right)\,.
 \label{eq:h-tail}
\end{equation}
Since $\rho=t/A$, the stable tail estimates imply, for every fixed $0<\sigma'<0.6$,
\begin{equation}
 h
 =
 1-\frac{2\mu}{\rho}
 +O\!\left(\e^{-\sigma' t}\right)\,,
 \qquad
 f
 =
 1-\frac{2\mu}{\rho}
 +O\!\left(\e^{-\sigma' t}\right)\,,
 \label{eq:asymptotic-flatness}
\end{equation}
where $\mu$ is given by~\eqref{eq:mu-definition}.
The slightly smaller exponent absorbs the factor $t$ multiplying $p$ in~\eqref{eq:f-tail}; polynomial factors times $\e^{-0.6t}$ are $O(\e^{-\sigma't})$.
This is the asymptotically flat end required by Theorem~\ref{thm:main}.

\section{Computer assisted proof}
\label{sec:proof}

\subsection{Domain decomposition}

The proof divides the exterior into the three domains, joined at their common endpoints, shown in Figure~\ref{fig:domains}.
The horizon series covers
\begin{equation}
 0\leq z\leq\frac{R}{20}\,,
 \qquad
 1\leq t\leq\frac{20}{19}\,.
 \label{eq:horizon-domain}
\end{equation}
The validated compact core covers
\begin{equation}
 \frac{20}{19}\leq t\leq40\,.
 \label{eq:core-domain}
\end{equation}
The centered stable tail covers
\begin{equation}
 40\leq t<\infty\,.
 \label{eq:tail-domain}
\end{equation}

\begin{figure}[t]
\centering
\begin{tikzpicture}[>=Latex,
 proofbox/.style={draw,rounded corners,align=center,text width=3.45cm,minimum height=1.5cm,font=\small}]
 \node[proofbox] (h) at (0,0) {Convergent horizon series\\$1\leq t\leq20/19$\\$0\leq z/R\leq1/20$};
 \node[proofbox] (c) at (4.25,0) {Validated\\compact core\\$20/19\leq t\leq40$\\$1/20\leq z/R\leq39/40$};
 \node[proofbox] (a) at (8.5,0) {Stable\\fixed-point tail\\$40\leq t<\infty$\\$39/40\leq z/R<1$};
 \draw[->] (h) -- (c); \draw[->] (c) -- (a);
 \node[align=center,font=\small,below=5pt] at (h.south) {Degree 180 series\\256 positivity subintervals};
 \node[align=center,font=\small,below=5pt] at (c.south) {485 state steps\\1590 sensitivity substeps};
 \node[align=center,font=\small,below=5pt] at (a.south) {One uniform contraction\\on the infinite interval};
\end{tikzpicture}
\caption{The validated domain decomposition (the horizontal spacing is schematic).
The series--core join is $z=R/20$, and the core--tail match is $z=39R/40$.}
\label{fig:domains}
\end{figure}

The five matching parameters are
\begin{equation}
 \theta=(b,R,\Ai,\beta,\eta)\,,
 \label{eq:theta}
\end{equation}
where $\eta$ is the stable amplitude normalized by $v_{0}(40)=\eta$.
Their centers and radii are listed in Table~\ref{tab:parameter-box}.

\begin{table}[t]
\centering
\begin{tabular}{@{}lll@{}}
\toprule
Parameter & Center & Radius \\
\midrule
$b$
& $0.3633018786279300967082111952175988891934060773442491092$
& $5\times10^{-45}$ \\
$R$
& $0.6972243957177816041843153424332653277909997011594910385$
& $5\times10^{-31}$ \\
$\Ai$
& $1.0324864466507795973677035145514604182$
& $3\times10^{-29}$ \\
$\beta$
& $-0.5689724145744066285521979945096614011$
& $2\times10^{-27}$ \\
$\eta$
& $1.8427814530748829568597814537921\times10^{-17}$
& $2\times10^{-29}$ \\
\bottomrule
\end{tabular}
\caption{The five-dimensional parameter box used in the matching theorem.
The displayed centers are abbreviated approximations, not the definitions of the box.
The machine readable certificate uses the full exact decimal strings supplied in the ancillary files.
Equations~\eqref{eq:b-box-intro}--\eqref{eq:R-box-intro} give outward decimal enclosures of the resulting $b$ and $R$ intervals.}
\label{tab:parameter-box}
\end{table}

\subsection{Convergence and positivity of the horizon expansion}

The exact recurrence admits a uniform complex majorant.

\begin{proposition}[Horizon majorant]
\label{prop:horizon-majorant}
For complex $b$ with $\abs{b}\leq0.37$, write
\begin{equation}
 W(z)=\sum_{n=0}^{\infty}w_{n}z^{n}\,,
 \qquad
 Q(z)=\sum_{n=0}^{\infty}q_{n}z^{n}\,.
 \label{eq:wq-coefficients}
\end{equation}
Then
\begin{equation}
 \abs{w_{n}}
 \leq
 \frac{10^{n}}{(n+1)^{2}}\,,
 \qquad
 \abs{q_{n}}
 \leq
 \frac{5\,10^{n}}{(n+1)^{4}}\,.
 \label{eq:majorant}
\end{equation}
Consequently, both series converge absolutely for $\abs{z}<0.1$, and their first four derivatives have explicit geometric tail bounds on every smaller disk.
\end{proposition}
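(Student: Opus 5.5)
The bound \eqref{eq:majorant} will follow by strong induction on $n$ from the triangular recurrence of Appendix~\ref{app:recurrence}. To set it up, I will substitute $W=\sum w_nz^n$ and $Q=\sum q_nz^n$ into the dimensionless forms of \eqref{eq:kundt-one} and \eqref{eq:kundt-trace}, with $\Omega=W$, $\mathcal H=-zQ$ and $\kappa=1/2$. Collecting the coefficient of $z^{n}$ should give a linear system for the pair $(w_{n+1},q_{n+2})$, or whichever pair is new at that order. Its left side is a small matrix with entries polynomial in $n$ and leading growth $\sim n^4$ from the $\mathcal H''''$ term. The right side is a finite sum of quadratic and cubic convolutions of lower coefficients, weighted by polynomial factors in the indices of degree at most four. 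By \eqref{eq:horizon-slopes} the leading matrix is invertible for all $n\ge1$, and I will write its inverse explicitly with rational coefficients in $n$. The parameter $b$ enters only through the initial data \eqref{eq:first-coefficients}. Since $|b|\le0.37$, the base cases are finitely many explicit bounds, which I will check with exact rational arithmetic up to some $n_0$ (say $n_0\approx 20$).

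For the inductive step, I will assume \eqref{eq:majorant} for all indices below $n$. Each convolution term is then bounded by products such as $\sum_{i+j=m}10^{m}(i+1)^{-2}(j+1)^{-2}$. I will use the standard estimate
\[
\sum_{i+j=m}\frac{1}{(i+1)^{2}(j+1)^{2}}\le\frac{c_2}{(m+1)^{2}},
\]
and its analogue with exponents $2$ and $4$, where $c_2$ is an explicit constant ($c_2=2\cdot\pi^2/6\cdot 4$ is crude but sufficient). Cubic terms are handled by iterating this estimate. The weights $10^n$ factor out exactly, because the total index in each term is fixed. Every term with total index $n-k$ then carries a spare factor $10^{-k}$, and terms with $k\ge1$ gain that geometric slack. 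The powers $(n+1)^{-2}$ and $(n+1)^{-4}$ are chosen so that the polynomial index weights, of degree up to four, divided by the $n^4$ growth of the inverted leading matrix, reproduce the same decay.

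The main obstacle is the constant bookkeeping, in particular closing the factor $5$ in the $q_n$ bound against the contributions fed back from $w$ and from the $b$-dependent lower terms. It is also unclear whether $10$ is large enough for the constants to close when $n$ is small. I would handle this by taking $n_0$ large enough that all polynomial ratios such as $(n+1)^4/(n-k+1)^4\cdot 10^{-k}$ are summably small. I would then verify symbolically, or with interval arithmetic, that the resulting inequality holds uniformly for $n>n_0$ and $|b|\le0.37$, and treat $n\le n_0$ by direct computation over the complex disk in $b$, using the fact that the coefficients are polynomials in $b$.

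With \eqref{eq:majorant} in hand, absolute convergence for $|z|<0.1$ follows from comparison with $\sum (10|z|)^n$. For the derivatives, on a disk $|z|\le r<0.1$ the $j$-th derivative tail beyond order $N$ is bounded by
\[
\sum_{n>N} n^{j}\,10^{n}(n+1)^{-2}r^{n-j},
\]
which is a geometric series with ratio $10r<1$ and has an explicit closed-form bound for $j\le4$.
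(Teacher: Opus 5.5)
There is a genuine gap, in the quantitative closure, which is the whole content of this proposition. Your overall architecture matches the paper's: majorant induction, a finite exact check, asymptotic bounds for large $n$, and geometric tails for the derivatives. The closure, however, rests on a wrong picture of the recurrence.

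\textbf{The recurrence.} The recurrences are triangular and purely quadratic. There is no $2\times2$ system to invert and there are no cubic terms.
\begin{itemize}
\item The trace equation at order $z^n$ gives $(n+1)^2w_{n+1}$ in terms of $q_1,\dots,q_{n+1}$ and $w_0,\dots,w_n$; see \eqref{eq:direct-w-recurrence}.
\item Only the $q$ step, \eqref{eq:direct-q-recurrence}, carries the quartic factor $(n+1)(n+2)(n+3)(n+4)$.
\end{itemize}
In the $w$ step the dominant term is $(n+1)\sum_{k\ge1}(n+1-k)q_kw_{n+1-k}$. Its total index is exactly $n+1$, so the powers of $10$ cancel with no spare factor. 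Its index weight has the same degree as the left side. Hence its ratio to the majorant does not decay in $n$; it tends to $5\sum_{k\ge1}(k+1)^{-4}=5(\zeta(4)-1)\approx0.41$. In the $q$ step there is exactly one spare $1/10$, and the ratio still tends to a nonzero constant, about $6\zeta(2)/50\approx0.2$. So neither geometric slack nor a larger $n_0$ makes these contributions ``summably small''. The induction closes only because these particular constants, plus the $O(1/n)$ cross terms, stay below $1$.

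\textbf{The convolution estimates.} Consequently the estimates must be sharp, and the ones you propose are not. Take the symmetric estimate $\sum_{i+j=m}(i+1)^{-2}(j+1)^{-2}\le c_2(m+1)^{-2}$ with $c_2=4\pi^2/3$, and apply it after bounding the index weights by their maxima. This gives an asymptotic $q$-step ratio of about $\tfrac{9}{50}c_2\approx2.4$. The analogous split estimate in the $w$ step gives about $20(\zeta(4)-1)\approx1.65$. Both figures are independent of $n$. So ``crude but sufficient'' is false, and a uniform check for all $n>n_0$ is not a finite interval computation without the missing analytic bound.

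\textbf{What the paper does instead.}
\begin{itemize}
\item It keeps the index weight inside the sum and absorbs it into the decay weight of the large-index factor, e.g.\ $(n+1-i)(n-i)\le(n+2-i)^2$. The main sums then reduce to the one-sided quantities $5\sum_{i\ge1}(i+1)^{-4}<5/12$ and $\sum_i(i+1)^{-2}<5/3$, not the two-sided symmetric constant.
\item It splits only the genuine cross sums at the midpoint, where they are $O(1/n)$. Via \eqref{eq:w-high-convolution-bounds} this gives $r_w<0.460$ and $r_q<0.263$ for $n\ge100$.
\item It checks the ratio functions exactly for $n<100$; these do not depend on $b$, and the worst case is $16817/29160$. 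Since $b$ enters only through the five initial coefficients, your polynomial-in-$b$ disk computation is unnecessary beyond those.
\end{itemize}
Your treatment of the derivative tails is fine.
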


\begin{proof}
The direct coefficient recurrences and the well founded order $w_2,q_3,w_3,q_4,\ldots$ are given in Appendix~\ref{app:recurrence}.
The initial coefficients satisfy the stated bounds.
For each subsequent step, the appendix derives the rational ratio functions and proves their high order estimates by explicit split sums.
The remaining finite range is checked exactly by the ancillary certificate.
The resulting coefficient bounds give normal convergence on every compact subdisk of $|z|<0.1$.
\end{proof}

The complete parameter interval~\eqref{eq:R-box-intro} satisfies
\begin{equation}
 \frac{R}{20}<0.035\,,
 \label{eq:zjoin-small}
\end{equation}
so the series to core join lies well inside the certified disk.
Using $180$ coefficients, outward interval evaluation on $256$ subintervals proves the following bounds throughout the horizon series domain:
\begin{equation}
 W\geq1\,,
 \label{eq:W-horizon-positive}
\end{equation}
\begin{equation}
 W_{z}
 \geq
 1.3633018786279300967082111952175988\,,
 \label{eq:Wz-horizon-positive}
\end{equation}
\begin{equation}
 Q
 \geq
 0.89528942067630256330400902383933075\,.
 \label{eq:Q-horizon-positive}
\end{equation}
The analytic remainder bounds are below $7\times10^{-88}$ for $W$ and below $2\times10^{-91}$ for $Q$ there.
The bound $W\ge1$ follows from $W(0)=1$ and the strictly positive derivative bound.
These inequalities exclude a further zero before the core begins.
Analytic horizon extension follows from the regular Kundt form, as explained in Section~\ref{sec:geometry}.

\subsection{Validated parameter dependent core}

The convergent horizon series initializes the first order system at $t=20/19$ for the complete $(b,R)$ box.
The centered parameter scaling is specified in Appendix~\ref{app:core}.
An MPFR interval Taylor method of order 56 at 512 bits encloses the center orbit, while binary128 interval Taylor series of order 12 retain the two first variation columns.
Uniform bounds on the three second variations control the remaining parameter dependence.
Each coarse step has a validated Picard tube.
All operations contributing to an enclosure are rounded outward; point estimates serve only to select stored centers or positive weights.

Appendix~\ref{app:core} gives the initialization bounds, Taylor remainders, logarithmic norm estimates, weight changes, and parameter bootstrap inequalities.
The sensitivity tube is computed from the interval coefficients of the true center Jacobian, and all recentering radii are measured from the represented center to both interval endpoints.
The maximal verified bootstrap ratios are bounded above by
\begin{equation}
 0.004006962574346467\,,\qquad 0.002383324686807518\,,
 \label{eq:core-bootstrap}
\end{equation}
so the uniform state and first variation assumptions are strict self enclosures.

The geometric bounds throughout the core are
\begin{align}
 A&\ge1.00235962122891209481142472124\,,
 \label{eq:A-core-positive}\\
 t+C&\ge0.0737302366694336803815908068962\,,
 \label{eq:D-core-positive}\\
 U:=1-\frac{tp}{A}
 &\ge0.950881733891050506912121315795\,.
 \label{eq:U-core-positive}
\end{align}
They separate the denominators of the differential equations from zero and imply positivity of both metric functions.
Table~\ref{tab:core-endpoint} summarizes the center state endpoint.
The complete parameter dependent enclosure also includes the two interval first variation columns and the quadratic remainder in~\eqref{eq:core-parameter-remainder}.

\begin{table}[t]
\centering
\begin{tabular}{@{}lcc@{}}
\toprule
Component & Approximate midpoint & Width bound\\
\midrule
$A$ & $1.0324864466507795973262935881174168583$
 & $5.89\times10^{-36}$\\
$p$ & $4.1360079890371705\times10^{-20}$
 & $5.90\times10^{-36}$\\
$C$ & $-0.56897241457440663853181033974624473$
 & $1.15\times10^{-33}$\\
$d$ & $9.7268279902030220\times10^{-18}$
 & $1.19\times10^{-33}$\\
$Y$ & $-9.4818074159354655\times10^{-18}$
 & $1.22\times10^{-33}$\\
$J$ & $9.2443453949621782\times10^{-18}$
 & $1.26\times10^{-33}$\\
\bottomrule
\end{tabular}
\caption{Center state enclosure at $t=40$ for $(b,R)=(b_0,R_0)$.
The rounded midpoints are orientation values, not definitions of the enclosures.
The width bounds are rounded upward; the exact outward endpoints are supplied in the ancillary data.
This table excludes parameter displacements.
Their first variation intervals and quadratic remainders enter the matching certificate separately.}
\label{tab:core-endpoint}
\end{table}

\subsection{The centered stable tail}
\label{subsec:tail-proof}

We construct the tail on $t\geq T=40$ in the coordinates
\begin{equation}
 X=(\alpha,p,c,d,u_0,v_0)\,,\qquad
 \alpha=A-\Ai\,,\quad c=C-\beta\,,\quad
 u_0=J+a_0Y\,,\quad v_0=J-a_0Y\,.
 \label{eq:tail-banach-coordinates}
\end{equation}
Here $a_0$ and $\nu_0$ are the fixed rational numbers specified by the finite decimal strings in \path{tail_shape_data.py}.
They approximate $\Aizero^{-1}$ and $\beta_0/(2\Aizero)$, respectively, with errors smaller than $10^{-178}$; these identities are not assumed exact by the certificate.
The physical linearized exponents remain $a=\Ai^{-1}$ and $\nu=\beta/(2\Ai)$.
Using fixed rational approximation exponents makes the entire residual calculation rational, including their discrepancy from the physical exponents.

Write $g(t)=\exp[-a_0(t-T)](t/T)^{\nu_0}$.
The two stored polynomials of degree 40 specify
\begin{equation}
 \alpha_0(t)=\eta g(t)P_A(1/t)\,,\qquad
 c_0(t)=\eta g(t)P_C(1/t)\,.
 \label{eq:tail-center}
\end{equation}
The other approximation components are obtained by exact differentiation: $p_0=\alpha_0'$, $d_0=c_0'$, $Y_0=c_0''$, $J_0=c_0'''$, and the last two components of $X_0$ are $J_0\pm a_0Y_0$.
The polynomials are exactly renormalized so that $J_0(T)-a_0Y_0(T)=\eta$.
For orientation, write $L=(\alpha_0,p_0,c_0,d_0,Y_0,J_0)(T)/\eta$ for the physical shape at the matching point.
Its abbreviated values are
\begin{equation}
 \begin{aligned}
 L\simeq(&-0.0022471425661976992,\ 0.0022444376038924905,
 -0.54155159466056986\,,\\
 &0.52783405074825983,\ -0.51453781456905451,\ 0.50165174929110921)\,.
 \end{aligned}
 \label{eq:stable-shape}
\end{equation}
The exact matching calculation uses the rational polynomial evaluations, not these abbreviated decimal values.
The floating point recurrence is only a way to choose an approximation; the proof independently bounds its residual.

For corrections in the coordinate order~\eqref{eq:tail-banach-coordinates}, use
\begin{equation}
 \norm{Z}_{\sigma,\boldsymbol q}
 =\max_i\sup_{t\geq T}\frac{\e^{\sigma(t-T)}|Z_i(t)|}{q_i}\,,
 \qquad \sigma=0.6\,,
 \label{eq:tail-norm}
\end{equation}
\begin{equation}
 \boldsymbol q=(0.01780,0.01068,3.045,1.827,0.2350,1)\,.
 \label{eq:tail-weights}
\end{equation}
In particular, the last two weights belong to $u_0,v_0$, not $Y,J$.
Appendix~\ref{app:tail} defines the centered operator and its uniform bounds.

\begin{proposition}[Uniform centered tail]
\label{prop:tail}
For the exact central parameters supplied in the ancillary data and
\begin{equation}
 |\Ai-\Aizero|\leq3\times10^{-29}\,,\qquad
 |\beta-\beta_0|\leq2\times10^{-27}\,,\qquad
 |\eta-\eta_0|\leq2\times10^{-29}\,,
 \label{eq:tail-parameter-box}
\end{equation}
the centered operator has a unique fixed point in the ball
\begin{equation}
 \norm{X-X_0}_{\sigma,\boldsymbol q}\leq5\times10^{-30}\,.
 \label{eq:tail-ball}
\end{equation}
This fixed point depends continuously on $(\Ai,\beta,\eta)$, solves all reduced Einstein--Weyl equations, has $v_0(T)=\eta$, and obeys $A>0$, $t+C>0$, and $1-tA'/A>0$ throughout the tail.
Uniqueness is asserted in this ball.
\end{proposition}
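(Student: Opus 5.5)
The plan is to set up the tail as a Banach fixed point problem on the weighted space defined by the norm~\eqref{eq:tail-norm}. The linear part will be integrated in the stable direction, with the unstable component pinned at infinity. Concretely, write the first order system~\eqref{eq:F}--\eqref{eq:first-order-rest} in the coordinates~\eqref{eq:tail-banach-coordinates}. For $u_0$, $v_0$, use the diagonalization~\eqref{eq:uv-system} with the fixed exponent $a_0$, moving the discrepancy $(a-a_0)$ and the $t^{\nu_0}$ power correction into the nonlinear remainder. The centered operator $\cT$ acts on a correction $Z=X-X_0$ as follows:
\begin{itemize}
\item[] $u_0$ is integrated backward from infinity, $u_0(t)=-\int_t^\infty \e^{-a_0(s-t)}(\cdots)\,\dd s$, which enforces $u(\infty)=0$;
\item[] $v_0$ is integrated forward from $T$ with initial value $\eta$, $v_0(t)=\eta\,\e^{-a_0(t-T)}+\int_T^t\e^{-a_0(t-s)}(\cdots)\,\dd s$;
\item[] the neutral variables $\alpha,p,c,d$ are integrated from infinity, $\alpha=-\int_t^\infty p$ and $c=-\int_t^\infty d$. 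The variables $p$ and $d$ are obtained similarly from $F$ and from $d'=Y=(u_0-v_0)/(2a_0)$, so that $A\to\Ai$ and $C\to\beta$ hold by construction.
\end{itemize}
Then $v_0(T)=\eta$ holds automatically, and fixed points are exactly the solutions with the prescribed asymptotics.

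The proof then follows the standard Newton–Kantorovich/radii polynomial pattern in three estimates.

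First, I bound the defect $\norm{\cT(0)}$, i.e.\ the residual of the stored approximation $X_0$ built from the degree-40 polynomials $P_A,P_C$ in~\eqref{eq:tail-center}. This uses exact rational arithmetic on the polynomial residual in $1/t$ times $g(t)$, plus the perturbations from $\Ai-\Aizero$, $\beta-\beta_0$, $\eta-\eta_0$ and the $10^{-178}$ discrepancies in $a_0,\nu_0$. Since $g$ decays like $\e^{-a_0(t-T)}$ with $a_0\approx0.9685>\sigma=0.6$, the weight absorbs the polynomial factors $t^{\nu_0}$ and $t^k$ uniformly on $t\ge 40$. The residual is quadratic in $\eta\sim10^{-17}$, so it can be made smaller than $10^{-30}$.

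Second, I bound the Lipschitz constant of $\cT$ on the ball~\eqref{eq:tail-ball}. This comes from uniform bounds on the derivative of the nonlinearity $H$ and of $F$, using that $A$ stays near $\Ai$ and that $t+C\ge 40-0.57-\text{small}>0$. Integrating against the kernels gives factors of the following kinds:
\begin{itemize}
\item[] $1/(a_0+\sigma)$ for the $u_0$ equation;
\item[] $1/(a_0-\sigma)$ for the $v_0$ equation;
\item[] $1/\sigma$ for the integrations from infinity.
\end{itemize}
The weights $\boldsymbol q$ will be chosen so that each weighted row sum is strictly below one.

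Third, I verify $\norm{\cT(0)}+\mathrm{Lip}\cdot r\le r$ with $r=5\times10^{-30}$.

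Contraction then gives existence and uniqueness in the ball. Continuous dependence on $(\Ai,\beta,\eta)$ follows from the uniform contraction principle, since $\cT$ is jointly continuous in the parameters. Positivity of $A$, $t+C$ and $1-tA'/A$ follows from the ball bound, since $|tp|\lesssim t\,\e^{-0.6(t-T)}\cdot 10^{-2}\eta$ is tiny. The reduced Einstein–Weyl equations follow from Corollary~\ref{cor:constraint}, because all derivative variables decay like $\e^{-\sigma t}$ and the denominators stay nonzero.

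The main obstacle is the second step: obtaining uniform-in-$t$ Lipschitz bounds on an infinite interval when the nonlinearity contains explicit growing factors such as $3t/(\kappa A^3)$ in~\eqref{eq:J-prime} and $t^3,t^4$ terms. I would first try to exploit the fact that these factors multiply products of at least two small quantities, each decaying at least like $\e^{-\sigma(t-T)}$. The extra exponential then dominates the powers of $t$ for $t\ge40$, and $\sup_{t\ge T} t^k\e^{-\sigma(t-T)}$ can be bounded explicitly by rational arithmetic. The linear $t$-dependent pieces, such as the $\nu$ correction, are exactly what the $t^{\nu_0}$ factor in $g$ is designed to remove.
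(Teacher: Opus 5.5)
Your proposal is correct and follows the paper's argument: the same centered integral operator (neutral components integrated from infinity, $u_0$ pinned at infinity with kernel $\e^{-a_0(s-t)}$, $v_0$ integrated from $T$ with vanishing correction there), the same kernel factors $\sigma^{-1},\sigma^{-2},(a_0\pm\sigma)^{-1}$, the test $r^{-1}\norm{\cT(0)}+\norm{D\cT}<1$, continuity by uniform contraction, and the last field equation from conservation of $\cN/A^4$. Two of your heuristics are off, though this is harmless because every bound is certified by exact rational arithmetic: the defect is dominated not by $O(\eta^2)$ terms but by the linear-in-$\eta$ truncation residual of the degree-40 generalized Yukawa polynomials (the paper's $K_{F,l},K_{J,l}$); and the $t^3,t^4$ factors occur only in $\cN$, not in the operator, while in $6tF/A^3$ the factor $t$ on the linear terms is cancelled by $1/(t+C)$, leaving an $O(1)$ coupling $\approx-12p/A^3$ into $u_0,v_0$ that is controlled by the small weight ratio $q_p/q_v$ rather than by extra decay.
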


\begin{proof}
With the residuals defined in Appendix~\ref{app:tail}, the rational certificate gives
\begin{align}
 |R_F(t)|&\leq
 3.246939858453728996228379\times10^{-33}\e^{-a_0(t-T)}\,,
 \label{eq:RF-bound}\\
 |R_J(t)|&\leq
 6.682954077213574474467859\times10^{-31}\e^{-a_0(t-T)}\,.
 \label{eq:RJ-bound}
\end{align}
On the stated ball the same calculation proves
\begin{equation}
 \norm{D\mathcal T}\leq0.582081478039319\,,
 \label{eq:tail-contraction}
\end{equation}
\begin{equation}
 \norm{\mathcal T(0)}\leq1.468097767726082\times10^{-30}\,.
 \label{eq:tail-defect}
\end{equation}
Thus, with $r=5\times10^{-30}$,
\begin{equation}
 r^{-1}\norm{\mathcal T(0)}+\norm{D\mathcal T}
 \leq0.875701031584536<1\,.
 \label{eq:tail-self-map}
\end{equation}
Banach's theorem applies to the closed ball in the weighted space of continuous functions.
Differentiating the integral equations gives the six ODEs.
All derivative variables decay exponentially, while $A\to\Ai>0$ and $C\to\beta$; hence every term of $\mathcal N$ tends to zero despite its polynomial factors in $t$.
Conservation of $\mathcal N/A^4$ supplies the remaining field equation.
Uniform denominator bounds give continuity of the operator in its parameters in the weighted norm; its uniform contraction then gives continuity of the fixed point.
Finally, the rational lower bounds, rounded downward here, are
\begin{align}
 A&>1.032486446650779597319946\,,
 \label{eq:A-tail-positive}\\
 t+C&>39.43102758542559336\,,
 \label{eq:D-tail-positive}\\
 1-\frac{tA'}A&>0.99999999999999999827\,.
 \label{eq:U-tail-positive}
\end{align}
These imply the geometric assertions and the asymptotically flat end described in~\eqref{eq:asymptotic-flatness}.
\end{proof}

\subsection{The matching map}

Define the validated core endpoint by
\begin{equation}
 \mathcal{X}_{\mathrm c}(b,R)
 =
 (A,p,C,d,Y,J)_{t=40}\,.
 \label{eq:core-map}
\end{equation}
Similarly, denote the exact tail fixed point at the matching surface by
\begin{equation}
 \mathcal{X}_{\mathrm t}(\Ai,\beta,\eta)
 =
 (A,p,C,d,Y,J)_{t=40}\,.
 \label{eq:tail-map}
\end{equation}
We match the five coordinates
\begin{equation}
 \Phi(\theta)
 =
 \begin{pmatrix}
 A_{\mathrm c}-A_{\mathrm t}\\
 p_{\mathrm c}-p_{\mathrm t}\\
 C_{\mathrm c}-C_{\mathrm t}\\
 d_{\mathrm c}-d_{\mathrm t}\\
 v_{0,\mathrm c}-v_{0,\mathrm t}
 \end{pmatrix}\,,
 \qquad
 v_{0}=J-a_{0}Y\,.
 \label{eq:matching-map}
\end{equation}
The horizon majorant, validated ODE flow, and uniform contraction imply that $\Phi$ is continuous on the box in Table~\ref{tab:parameter-box}.

Let $P$ be the exact rational preconditioner obtained by inverting the rationalized midpoint Jacobian, and set
\begin{equation}
 G=P\Phi\,.
 \label{eq:preconditioned-map}
\end{equation}
The Poincar\'e--Miranda theorem applies if, for each coordinate $i$, $G_{i}<0$ on the lower face in direction $i$ and $G_{i}>0$ on the upper face in direction $i$.
All interval information from the core and tail is converted to rational endpoints, allowing one last place decimal unit for every transferred quantity, before this final calculation.
The error budget includes the core endpoint width, derivative column widths times the parameter radii, the quadratic core remainder, and the tail ball component radii; multiplication by $|P|$ gives the five preconditioned errors.
Appendix~\ref{app:poincare} gives the affine decomposition and its exact face test.
Table~\ref{tab:poincare} lists the remaining sign margins as fractions of the corresponding parameter radii.

\begin{table}[t]
\centering
\begin{tabular}{@{}clc@{}}
\toprule
$i$ & Matched parameter direction & Certified margin fraction \\
\midrule
1 & $b$ & $0.540619770207$ \\
2 & $R$ & $0.574495623641$ \\
3 & $\Ai$ & $0.597534509130$ \\
4 & $\beta$ & $0.547484807780$ \\
5 & $\eta$ & $0.647565186013$ \\
\bottomrule
\end{tabular}
\caption{Exact rational Poincar\'e--Miranda face margins.
Entries are rounded downward.
A positive entry means that the combined nominal center displacement and enclosure error occupy less than the face radius by the displayed fraction.}
\label{tab:poincare}
\end{table}

\begin{proposition}[Existence of a matching zero]
\label{prop:matching}
There exists at least one parameter tuple in the box of Table~\ref{tab:parameter-box} for which $\Phi=0$.
\end{proposition}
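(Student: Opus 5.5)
The existence of the zero will follow from the Poincar\'e--Miranda theorem applied to the preconditioned map $G=P\Phi$ of~\eqref{eq:preconditioned-map} on the box $\mathcal B$ of Table~\ref{tab:parameter-box}. Since $P$ is an exact invertible rational matrix, $G=0$ holds if and only if $\Phi=0$. It therefore suffices to show that $G$ is continuous on $\mathcal B$ and that, for every $i$, $G_i<0$ on the lower face in direction $i$ and $G_i>0$ on the upper face. Continuity is assembled from earlier results. Proposition~\ref{prop:horizon-majorant} gives analytic dependence of the initial data at $t=20/19$ on $(b,R)$. The validated core with the denominator bounds~\eqref{eq:A-core-positive}--\eqref{eq:U-core-positive} gives a unique, continuously parameter dependent solution up to $t=40$, so $\mathcal X_{\rm c}$ is continuous. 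Proposition~\ref{prop:tail} gives continuity of $\mathcal X_{\rm t}$ in $(\Ai,\beta,\eta)$. Evaluation at $t=40$ is continuous in the weighted norm.

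For the face signs, I will write each component of $\Phi$ in an affine form plus an enclosed error. Set $\theta=\theta_0+\delta$ with $|\delta_j|\le r_j$. On the core side, $\mathcal X_{\rm c}(b,R)$ lies in the center endpoint enclosure plus the interval first-variation columns applied to $(\delta_b,\delta_R)$, plus the quadratic remainder~\eqref{eq:core-parameter-remainder}. On the tail side, $\mathcal X_{\rm t}$ equals $X_0(40)$ plus a correction of weighted norm at most $5\times10^{-30}$. Here $X_0(40)$ is evaluated exactly from the rational polynomials $P_A,P_C$ and depends explicitly on $(\Ai,\beta,\eta)$, linearly in $\eta$. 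Since the weight factor is $1$ at $t=T$, each component is within $q_i\cdot5\times10^{-30}$, and the $v_0$ component is pinned exactly at $\eta$. Combining these gives
\begin{equation}
 \Phi(\theta)\in\Phi_0+J_0\delta+E\,,
\end{equation}
where $J_0$ is the rationalized midpoint Jacobian and the interval vector $E$ absorbs four contributions: the center widths, the column widths times the radii, the quadratic remainders, and the tail radii. The tail dependence on $(\Ai,\beta)$ through $X_0$ also needs a linear-plus-remainder bound. I will take its derivative columns from the exact polynomial formula, with a rigorous second-order bound. Then
\begin{equation}
 G=P\Phi\in P\Phi_0+\delta+(PJ_0-I)\delta+PE\,.
\end{equation}
On the face $\delta_i=\pm r_i$, the $i$th component is $\pm r_i$ plus a perturbation bounded by $|(P\Phi_0)_i|+\sum_j|(PJ_0-I)_{ij}|r_j+(|P||E|)_i$. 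The sign test reduces to this perturbation being strictly below $r_i$. This is exactly the margin fraction recorded in Table~\ref{tab:poincare}, and all five entries are positive.

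The main obstacle is the rigorous transfer of all interval data into the final rational inequality without losing the margin. The core Jacobian columns in $b$ are of order $10^{14}$, while the radius in $b$ is $5\times10^{-45}$, so the column widths times the radii and the quadratic remainder must be small relative to the tail tolerance of order $10^{-30}$. I would handle this by outward rounding of every transferred endpoint by one last-place unit, as described, and by evaluating the face inequalities entirely in exact rational arithmetic. With the margins of roughly $0.54$--$0.65$ in Table~\ref{tab:poincare}, the strict inequalities hold, so Poincar\'e--Miranda yields a zero of $G$ in $\mathcal B$, and hence a zero of $\Phi$.
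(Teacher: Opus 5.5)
Your proposal is correct and follows the paper's argument: the affine enclosure $\Phi=\widehat\Phi_0+J_0\delta+e(\theta)$ with the four-part error budget (center widths, column widths times radii, quadratic core remainder, tail radii), exact rational preconditioning, the face test $|g_{0,i}|+E_i<r_i$, and then Poincar\'e--Miranda together with invertibility of $P$. Two of your precautions are unnecessary but harmless: the paper checks $PJ_0=I$ exactly, and the approximate tail at $T=40$ is exactly affine in $(\Ai,\beta,\eta)$ (because $a_0,\nu_0$ are fixed rationals and $g(T)=1$), so neither the $(PJ_0-I)\delta$ term nor a second-order tail bound arises.
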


\begin{proof}
For every $i=1,\ldots,5$, the exact rational certificate gives $G_{i}<0$ on the lower face in direction $i$ and $G_{i}>0$ on the upper face in direction $i$.
The smallest normalized margin is greater than $0.54$.
The Poincar\'e--Miranda theorem therefore gives a point in the box for which $G=0$.
Since $P$ is invertible, $\Phi=0$ at the same point.
\end{proof}

The proposition gives existence rather than uniqueness.
This is why the theorem is stated without a local uniqueness claim.

\subsection{Completion of the sixth state component}

The five coordinate matching map does not explicitly match $Y$ and $J$ separately.
The conserved constraint supplies the missing information.
At fixed $v_{0}=J-a_{0}Y$, write
\begin{equation}
 J=v_{0}+a_{0}Y\,.
 \label{eq:J-from-v}
\end{equation}
Substitute this relation into~\eqref{eq:N-constraint}.
On a common interval box larger than the actual matching neighborhood, exact rational arithmetic gives
\begin{equation}
 -3268.6765117856241
 <
 \left.\frac{\partial\cN}{\partial Y}\right|_{v_{0}}
 <
 -3265.8225581947184\,.
 \label{eq:constraint-monotonicity}
\end{equation}
Thus the constraint is strictly decreasing in $Y$ at fixed $(A,p,C,d,v_{0})$.
The horizon recurrence obeys~\eqref{eq:kundt-two}, as proved in Appendix~\ref{app:recurrence} by evaluating its conserved residual at $z=0$.
This supplies $\cN=0$ at the series--core join, and Lemma~\ref{lem:constraint} propagates it along the core.
The first order $t$ system itself is singular at $t=1$, where $t+C=0$; it is only started at $t=20/19$.
For the tail, the exponential bounds and corollary~\ref{cor:constraint} give $\cN=0$.
Thus both sides of the match obey the constraint.
Matching $(A,p,C,d,v_{0})$ therefore forces their $Y$ values to agree, after which~\eqref{eq:J-from-v} forces $J$ to agree.
The matching zero is a genuine six state solution.

\subsection{Proof of Theorem~\ref{thm:main}}

\begin{proof}[Proof of Theorem~\ref{thm:main}]
Proposition~\ref{prop:horizon-majorant} gives an analytic horizon solution and rigorous initial data for the core.
The horizon positivity bounds~\eqref{eq:W-horizon-positive}--\eqref{eq:Q-horizon-positive} give a regular nondegenerate horizon and no additional zero before $t=20/19$.
The Taylor model integration gives a unique core orbit for every $(b,R)$ in the core box and proves the positivity bounds~\eqref{eq:A-core-positive}--\eqref{eq:U-core-positive}.
Proposition~\ref{prop:tail} gives a unique tail without a growing mode in the certified ball for every $(\Ai,\beta,\eta)$ in the tail box and proves the tail positivity bounds~\eqref{eq:A-tail-positive}--\eqref{eq:U-tail-positive}.
Proposition~\ref{prop:matching} gives a parameter tuple for which the core and tail match in five coordinates.
The constraint monotonicity~\eqref{eq:constraint-monotonicity} completes the sixth coordinate.
Lemma~\ref{lem:constraint} then shows that the global matched orbit satisfies all reduced Einstein--Weyl equations.
Equations~\eqref{eq:h-tail},~\eqref{eq:f-tail}, and~\eqref{eq:asymptotic-flatness} prove asymptotic flatness after constant time normalization.
The positivity estimates throughout all three domains prove~\eqref{eq:metric-positivity-theorem}.
The ingoing extension and causal argument in Section~\ref{sec:future-horizon} identify the future event horizon relative to the chosen asymptotically flat end.
The trace equation gives $R=0$.
Finally, the $b$ interval excludes zero, and $b\neq0$ implies a nonzero Bach tensor at the horizon, so~\eqref{eq:Ricci-Bach} gives $R_{\mu\nu}\neq0$.
\end{proof}

\section{Global geometry and physical parameters}
\label{sec:geometry}

\subsection{The future event horizon and regular exterior}
\label{sec:future-horizon}

The metric functions are controlled in variables adapted to each domain.
In the horizon series domain,
\begin{equation}
 \widehat h=zQW^{2}\,,
 \qquad
 f=zQ\left(\frac{W_{z}}{W}\right)^{2}\,,
 \label{eq:hf-horizon-repeat}
\end{equation}
and the certified inequalities $W>0$, $W_{z}>0$, and $Q>0$ show that the common zero at $z=0$ is simple and isolated.
In the core and tail domains,
\begin{equation}
 \widehat h
 =
 \frac{R^{2}}{A^{2}}
 \left(1+\frac{C}{t}\right)\,,
 \qquad
 f
 =
 \left(1+\frac{C}{t}\right)
 U^{2}\,,
 \label{eq:hf-core-tail}
\end{equation}
with $A>0$, $t+C>0$, and $U>0$.
Thus neither metric function vanishes again.
The areal radius is monotone because
\begin{equation}
 \frac{\dd\rho}{\dd t}
 =
 \frac{U}{A}>0\,.
 \label{eq:rho-monotone}
\end{equation}
The domain of outer communication therefore has the standard topology $(\bar r_{h},\infty)\times\RR\times S^{2}$.
Choose the exterior time orientation so that the static Killing field is future directed.
The chart in~\eqref{eq:dimensionless-kundt} has the outgoing Eddington--Finkelstein sign and directly extends through the past horizon.
To construct the future extension, write $\chi(z)=zQ(z)$ and introduce
\begin{equation}
 r_* =\int^z\frac{\dd\zeta}{\zeta Q(\zeta)}
      =\log|z|+\psi(z)\,,\qquad
 v=\upsilon+2r_*\,.
 \label{eq:future-horizon-tortoise}
\end{equation}
Since $Q(0)=1$, the derivative $\psi'(z)=[Q(z)^{-1}-1]/z$ is analytic at zero.
On the exterior overlap the metric becomes
\begin{equation}
 m_2^2\dd s^2
 =W^2\left[\dd\Omega_2^2-\chi(z)\dd v^2+2\dd v\,\dd z\right]\,.
 \label{eq:ingoing-horizon-metric}
\end{equation}
Its coefficients extend analytically through $z=0$ and the metric is nondegenerate there.
Choose $\epsilon>0$ small enough that $W,Q>0$ for $|z|<\epsilon$, and adjoin this ingoing collar to the entire certified exterior.
This defines a vacuum spacetime $\mathcal M_f$ with coordinates $v\in\RR$, $-\epsilon<z<R$, and $S^2$, where the equations in the collar follow by analytic continuation of the horizon series.

Define the radial null fields
\begin{equation}
 L=\partial_v+\frac{\chi}{2}\partial_z\,,\qquad
 N=-\partial_z\,,\qquad
 g(L,N)=-\frac{W^2}{m_2^2}<0\,.
 \label{eq:future-null-basis}
\end{equation}
They are future directed on the exterior and fix the continued time orientation.
Every future causal tangent has the form $X=c_L L+c_N N+S$, with $c_L,c_N\geq0$ and $S$ tangent to the sphere.
For a nonzero such tangent, causality prevents $c_L=c_N=0$.
Consequently
\begin{equation}
 X(z)=\frac{c_L\chi}{2}-c_N<0\quad(-\epsilon<z<0)\,,\qquad
 X(z)\leq0\quad(z=0)\,.
 \label{eq:future-causal-radial}
\end{equation}
No future causal curve from the interior collar can cross $z=0$ into the exterior.
The inequality $\dd z\leq\chi(z)\dd v/2$, with $\dd v\geq0$ and $\chi(z)=z+O(z^2)$, also excludes outward escape from the horizon itself by comparison with the unique radial null solution $z=0$.
Conversely, every exterior point admits an outgoing radial null ray escaping to the chosen asymptotically flat end.
Its areal radius grows without bound and its affine parameter is unbounded, since in the asymptotically normalized metric $\dd\bar r/\dd\lambda=E\sqrt{f/h}\to E>0$ for positive conserved Killing energy $E$.
Writing $\mathcal I^+$ for future null infinity of that end, these facts give
\begin{equation}
 J^-(\mathcal I^+)\cap\mathcal M_f=\{z>0\}\,,\qquad
 \partial_{\mathcal M_f}J^-(\mathcal I^+)=\{z=0\}\,.
 \label{eq:future-event-horizon}
\end{equation}
Thus $z=0$ is a future event horizon in this explicitly specified extension.
This conclusion requires neither a maximal extension nor a description of the geometry beyond the interior collar.

The nonzero Ricci tensor can also be read directly at the horizon.
In physical areal radius,
\begin{equation}
 R^\theta{}_\theta=
 \frac{1-f-\bar r f_{\bar r}/2-\bar r f\widehat h_{\bar r}/(2\widehat h)}{\bar r^2}\,.
 \label{eq:angular-ricci}
\end{equation}
The simple zero expansions give $f_{\bar r}(\bar r_h)=(1+b)/\bar r_h$ and hence $R^\theta{}_\theta|_h=R^\phi{}_\phi|_h=-b/\bar r_h^2$.
Regularity at the static horizon gives equality of the two normal eigenvalues; together with $R=0$ this yields
\begin{equation}
 R^\tau{}_\tau|_h=R^{\bar r}{}_{\bar r}|_h=\frac{b}{\bar r_h^2}\,,\qquad
 R_{\mu\nu}R^{\mu\nu}|_h=\frac{4b^2}{\bar r_h^4}\,.
 \label{eq:horizon-ricci}
\end{equation}
The certified interval excludes $b=0$, so the Ricci and Bach tensors are nonzero.

\subsection{Mass, temperature, and entropy}

The certified tail parameter box gives the mass interval
\begin{equation}
 \begin{gathered}
 \CertifiedMassLower<m_2M\,,\\
 m_2M<\CertifiedMassUpper\,.
 \end{gathered}
 \label{eq:mass-interval}
\end{equation}
At $\rho_h=1$, the horizon expansions give $\widehat h_\rho(1)=1/(1+b)$ and $f_\rho(1)=1+b$.
Thus the surface gravity in the horizon time normalization satisfies
\begin{equation}
 \kappa_{\rm sg}^2=\frac{m_2^2}{4}\widehat h_\rho(1)f_\rho(1)=\frac{m_2^2}{4}\,.
 \label{eq:surface-gravity}
\end{equation}
The asymptotic normalization of time multiplies the temperature by $\Ai/R$, so
\begin{equation}
 \frac{T}{m_{2}}
 =
 \frac{\Ai}{4\pi R}\,.
 \label{eq:temperature-formula}
\end{equation}
The parameter boxes imply
\begin{equation}
 \begin{gathered}
 \CertifiedTemperatureLower<\frac{T}{m_2}\,,\\
 \frac{T}{m_2}<\CertifiedTemperatureUpper\,.
 \end{gathered}
 \label{eq:temperature-interval}
\end{equation}

The absolute Wald entropy depends on the topological term retained in the action~\cite{Wald:1993nt,Iyer:1994ys,Lu:2015cqa}.
For~\eqref{eq:action} exactly as written, the curvature derivative in the Noether charge formula gives
\begin{equation}
 S_{\rm lit}=\frac{\mathcal A_h}{4G}
 +\frac{k}{4G}\int_h C^{abcd}\epsilon_{ab}\epsilon_{cd}\,\dd\mathcal A\,,
 \qquad \epsilon_{ab}\epsilon^{ab}=-2\,.
 \label{eq:wald-integral}
\end{equation}
The horizon expansion has $\mathcal H''(r_h)+2=6(1+b)$, so the normal Weyl contraction is $C^{abcd}\epsilon_{ab}\epsilon_{cd}=-4(1+b)/\bar r_h^2$.
Consequently
\begin{equation}
 S_{\rm lit}=\frac{\pi\bar r_h^2}{G}-\frac{4\pi k}{G}(b+1)
 =\frac{\pi}{Gm_2^2}\bigl[\rho_h^2-2(b+1)\bigr]\,.
 \label{eq:wald-entropy}
\end{equation}
Here ``lit'' denotes the literal action~\eqref{eq:action}.
The often used alternative convention adds $+kE_4$ to its Lagrangian, where
\begin{equation}
 E_4=R_{abcd}R^{abcd}-4R_{ab}R^{ab}+R^2\,,\qquad
 C^2=E_4+2R_{ab}R^{ab}-\tfrac23 R^2\,.
 \label{eq:euler-density}
\end{equation}
This leaves the four-dimensional vacuum equations unchanged but shifts the entropy of a spherical horizon by $4\pi k/G$:
\begin{equation}
 S_{\rm shift}=S_{\rm lit}+\frac{4\pi k}{G}
 =\frac{\pi}{Gm_2^2}(\rho_h^2-2b)\,.
 \label{eq:entropy-dimensionless}
\end{equation}
It is this shifted convention that reproduces the area law for Schwarzschild and the entropy quoted in~\cite{Lu:2015cqa,Podolsky:2018pfe,Podolsky:2019gro}.
For the certified solution, $Gm_2^2S_{\rm lit}\simeq-5.42428567946$ and $Gm_2^2S_{\rm shift}\simeq0.858899627724$; rigorous outward intervals are supplied in Appendix~\ref{app:poincare} and the ancillary physical parameter output.
The sign of an absolute entropy with this additive topological offset does not establish dynamical stability.

For a direct numerical comparison,~\cite{Podolsky:2018pfe}, Figures~1--2, uses $r_h=-1$, $k=1/2$, hence the same $\rho_h=1$ and the same definition of $b$.

\begin{table}[ht]
\centering
\begin{tabular}{@{}lll@{}}
\toprule
Quantity & Ref.~\cite{Podolsky:2018pfe} & Present certified solution (rounded)\\
\midrule
$b$ & $0.3633018769168$ & $0.36330187862793$\\
$2m_2M$ & $\simeq0.55$ & $0.55107010500725$\\
\bottomrule
\end{tabular}
\caption{Comparison in a common normalization.
The final column displays approximations to enclosures, not new interval endpoints.}
\label{tab:literature-comparison}
\end{table}

The difference in $b$ is about $1.71113\times10^{-9}$, or $4.71\times10^{-9}$ relative: the values agree to eight significant figures, not to every printed digit of the earlier shooting value.
This comparison supports identification with the known numerical branch; it is neither an input to the proof nor a proof that all numerical or exact solutions at this radius belong to a single branch.

\subsection{Scope and possible extensions}
\label{sec:scope}

The theorem gives a globally defined constructive solution even though no finite elementary expression for $h$ and $f$ is known.
The metric is determined by a convergent recurrence, a validated compact core orbit, and a validated asymptotic fixed point, and every global matching inequality required for existence is certified.

The Poincar\'e--Miranda argument proves that at least one parameter tuple lies in the five-dimensional box.
It does not exclude two or more zeros inside that box.
A possible route to uniqueness is an interval Newton or Krawczyk inclusion for the full matching map.
The core first and second variations are already enclosed, but derivatives of the approximate tail shape are not derivatives of the exact fixed point.
For the centered operator $w=\cT_\lambda(w)$, $\lambda=(\Ai,\beta,\eta)$, its parameter derivative would be enclosed through
\begin{equation}
 (I-D_w\cT_\lambda)D_\lambda w=D_\lambda\cT_\lambda\,.
 \label{eq:tail-parameter-derivative}
\end{equation}
The uniform contraction gives $\|(I-D_w\cT_\lambda)^{-1}\|<2.40$.
An enclosure of the right hand side, followed by an interval inclusion for the resulting $5\times5$ Jacobian, is therefore a plausible additional calculation; it has not been performed here.
Small parameter radii alone do not establish uniqueness.
The tail fixed point itself is unique in the specified weighted ball for each fixed parameter tuple, and the core initial value orbit is unique for each fixed $(b,R)$, but these facts do not by themselves imply uniqueness of the global parameter match.

The present result also concerns one point on the non-Schwarzschild branch.
The same architecture should apply on an open interval of horizon radii, provided the matching Jacobian remains nondegenerate and the stable tail constants remain uniform.
At the Schwarzschild bifurcation, one should instead combine the present estimates with a validated Lyapunov--Schmidt reduction of the Lichnerowicz zero mode~\cite{Lu:2017kzi}.

\section{The cold endpoint}
\label{sec:cold-endpoint}

The theorem concerns one finite temperature solution at $\rho_h=1$.
It supplies no validated continuation towards zero horizon radius or zero temperature.
Reference~\cite{Goldstein:2017rxu} discusses a possible cold limit using numerical extrapolation, with increasing integration difficulties.
The relation of the asymptotic Yukawa amplitude to thermodynamics, the broader Einstein--Weyl phase diagram, and late stage evaporation scenarios are investigated in~\cite{Bonanno:2019,Silveravalle:2023,Bonanno:2024}.
These studies motivate a separate global analysis; no continuation data from the present certificate are asserted.

One local obstruction can be proved without such continuation.
\begin{proposition}[No regular extremal horizon of finite area]
\label{prop:no-extremal}
A static spherical vacuum solution of~\eqref{eq:action} cannot have a smooth extremal horizon represented in a regular Kundt chart by $0<|\Omega_h|<\infty$, finite derivatives through the order of the field equations, and $\mathcal H_h=\mathcal H'_h=0$.
\end{proposition}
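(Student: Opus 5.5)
The plan is to evaluate the reduced Kundt system \eqref{eq:kundt-one}--\eqref{eq:kundt-trace} directly at the putative extremal horizon $r=r_h$ and show that the trace equation and the second field equation are incompatible there when $\Omega_h\neq0$. The hypotheses are chosen for exactly this purpose. The finiteness of $\Omega$, $\Omega'$, $\Omega''$ and of $\mathcal H$ through $\mathcal H''''$ at $r_h$ lets every term in \eqref{eq:kundt-one}--\eqref{eq:kundt-trace} be evaluated by continuity from the exterior. We then impose $\mathcal H_h=\mathcal H'_h=0$.

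First, consider the trace equation \eqref{eq:kundt-trace} at $r_h$. The terms $\mathcal H\Omega''$ and $\mathcal H'\Omega'$ vanish because $\mathcal H_h=\mathcal H'_h=0$ and $\Omega''_h,\Omega'_h$ are finite. What remains is
\begin{equation*}
 \tfrac16\bigl(\mathcal H''_h+2\bigr)\Omega_h=0\,.
\end{equation*}
Since $0<|\Omega_h|<\infty$, this forces $\mathcal H''_h=-2$.

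Second, consider the second field equation \eqref{eq:kundt-two} at $r_h$. On the left, $\Omega\Omega'\mathcal H'$ and $3\Omega'^2\mathcal H$ vanish, leaving $\Omega_h^2$. On the right, $\mathcal H'\mathcal H'''$ vanishes because $\mathcal H'''_h$ is finite, so the right hand side equals $\frac{k}{3}\bigl(2-\tfrac12\mathcal H''^{2}_h\bigr)$. Substituting $\mathcal H''_h=-2$ makes this $\frac{k}{3}(2-2)=0$. Hence $\Omega_h^2=0$, which contradicts $\Omega_h\neq0$.

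The only delicate point is justifying that the static spherical vacuum solution is genuinely governed by \eqref{eq:kundt-one}--\eqref{eq:kundt-trace} in a chart regular at the degenerate horizon. I would argue this by citing the conformal-to-Kundt reduction of~\cite{Pravda:2017,Podolsky:2018pfe,Podolsky:2019gro}, which holds for every static spherical metric, and then extending the equations to $r_h$ by continuity using the assumed regularity. I would also remark that only the two relations \eqref{eq:kundt-two} and \eqref{eq:kundt-trace} are used, so the fourth order equation \eqref{eq:kundt-one} imposes no further condition. The argument uses only $k>0$ finite, so it is independent of $\rho_h$.
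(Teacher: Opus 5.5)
Your proof is correct and is essentially the paper's argument: the trace equation \eqref{eq:kundt-trace} at the horizon, with $\mathcal H_h=\mathcal H'_h=0$ and $\Omega_h\neq0$, forces $\mathcal H''_h=-2$. Equation \eqref{eq:kundt-two} then reduces to $\Omega_h^2=\frac{k}{3}\bigl(2-\tfrac12\cdot4\bigr)=0$, a contradiction; your remark about extending the reduced equations to $r_h$ by continuity only makes explicit what the paper leaves implicit.
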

\begin{proof}
The trace~\eqref{eq:kundt-trace} at the horizon gives $\mathcal H''_h=-2$.
Equation~\eqref{eq:kundt-two} then reduces to
\begin{equation}
 \Omega_h^2=\frac{k}{3}\left[-\frac12(-2)^2+2\right]=0\,,
\end{equation}
contradicting finite nonzero horizon area.
\end{proof}

If a separately established family were to obey $\rho_h\to0$ and $b\to-1$,~\eqref{eq:horizon-ricci} would imply divergent horizon curvature, $R_{\mu\nu}R^{\mu\nu}|_h\sim4m_2^4/\rho_h^4$.
Those horizon limits alone do not determine the temperature, which also depends on the normalization of time at infinity.
The same conditional limit gives $S_{\rm lit}\to0$ for the stated action and $S_{\rm shift}\to2\pi/(Gm_2^2)$ after adding the Euler density.
Thus a finite nonzero entropy offset in this scenario is convention dependent and does not establish a regular remnant or the existence of a cold endpoint.

\section{Discussion}
\label{sec:discussion}

The non-Schwarzschild Einstein--Weyl black hole has long occupied an intermediate status.
Its numerical construction is robust, and several independent representations reproduce its geometry and thermodynamics, but a global analytic existence statement was absent~\cite{Lu:2015cqa,Lu:2015psa,Goldstein:2017rxu,Kokkotas:2017zwt,Podolsky:2019gro}.
The obstruction is not a failure of local analyticity at the horizon.
It is the exponentially unstable connection problem between a regular horizon and an irregular asymptotic endpoint.

The proof given here treats that connection problem directly.
The horizon recurrence supplies analytic local data.
The compact core is enclosed without losing the tiny two parameter correlations that select the asymptotically flat orbit.
At infinity, the growing massive mode is removed as a boundary condition in a weighted integral equation rather than by unstable outward shooting.
Finally, the parameter match is obtained topologically, so a full interval enclosure of the exact $5\times5$ derivative is unnecessary for existence.

The proof also clarifies why the algebraic asymptotic expansion does not reveal the non-Schwarzschild hair.
The formal generalized Yukawa sector has $A-\Ai\sim\e^{-at}t^{\nu-1}$ and $C-\beta\sim\e^{-at}t^\nu$.
For the metric functions this gives the distinct formal leading powers
\begin{equation}
 h-h_{\rm Sch}\sim c_h\rho^{-1-\mu}\e^{-\rho}\,,\qquad
 f-f_{\rm Sch}\sim c_f\rho^{-\mu}\e^{-\rho}\,,
 \label{eq:Yukawa-summary}
\end{equation}
where $h_{\rm Sch}=f_{\rm Sch}=1-2\mu/\rho$ and the constants depend on normalization.
The additional power in $f$ comes from the factor $tp$ in~\eqref{eq:f-tail}.
The existence theorem uses the weaker rigorous exponential envelope in~\eqref{eq:asymptotic-flatness}; it does not prove the sharp asymptotic equivalences in~\eqref{eq:Yukawa-summary}.
Both rates are beyond every algebraic order in $1/\rho$.

The theorem establishes a classical configuration with massive spin-two hair and does not include a perturbative stability proof.
Existing numerical work finds a monopole instability below the branch intersection near $\rho_h\simeq0.87$ and no such instability above it~\cite{Held:2023}.
The certified radius $\rho_h=1$ lies above that threshold.
More recent quasinormal mode calculations find numerical evidence for axial stability and radially stable parameter regions~\cite{Antoniou:2025}.
These mode calculations provide context, but do not constitute a complete rigorous stability analysis of the certified background.

Several extensions would require further validation.
One can validate a segment of the branch and its mass versus temperature curve, prove local uniqueness by an interval Newton argument, or combine the spherical existence theorem with a spectral stability calculation.
The same domain decomposition strategy may also be useful for black holes with asymptotically AdS geometry in higher derivative theories~\cite{Svarc:2018} and for matter coupled systems with one or more exponentially growing modes.

\acknowledgments

The authors thank the developers of MPFR and the broader validated numerics community whose software and methods make computer assisted proofs of nonlinear boundary value problems possible.
The computations reported here were performed with directed rounding interval arithmetic and exact rational postprocessing.
VJ thanks the NSF Institute for Artificial Intelligence and Fundamental Interactions (IAIFI) and the Department of Physics at Northeastern University for hospitality during his sabbatical during which much of this research was undertaken.
VJ is supported by the South African Research Chairs Initiative of the Department of Science, Technology, and Innovation and the National Research Foundation (grant 78554).

\appendix

\section{Horizon recurrence and convergence}
\label{app:recurrence}

\subsection{Direct recurrence and the horizon constraint}

In the dimensionless horizon gauge the reduced equations have $\Omega=W$, $\mathcal H=-zQ$, and $\kappa=1/2$.
Thus the first equation and the trace equation are
\begin{align}
 WW_{zz}-2W_z^2&=-\frac16(zQ)_{zzzz}\,,
 \label{eq:horizon-direct-first}\\
 zQW_{zz}+(Q+zQ_z)W_z+
 \frac16(2Q_z+zQ_{zz}-2)W&=0\,.
 \label{eq:horizon-direct-trace}
\end{align}
Write $W=\sum_{n\ge0}w_nz^n$ and $Q=\sum_{n\ge0}q_nz^n$.
The initial data are
\begin{equation}
 w_0=q_0=1\,,\quad w_1=1+b\,,\quad q_1=-2-3b\,,\quad
 q_2=1+3b+3b^2 \,.
 \label{eq:direct-initial}
\end{equation}
Equating coefficients gives, for $n\ge1$,
\begin{align}
 (n+1)^2w_{n+1}
 ={}&\frac{w_n}{3}
 -(n+1)\sum_{k=1}^{n}(n+1-k)q_kw_{n+1-k}
 \nonumber\\[-1mm]
 &-\frac16\sum_{k=1}^{n+1}k(k+1)q_kw_{n+1-k}\,,
 \label{eq:direct-w-recurrence}
\end{align}
and, for $n\ge0$,
\begin{align}
 q_{n+3}={}&-\frac{6}{(n+1)(n+2)(n+3)(n+4)}
 \sum_{i=0}^{n}\Big[
 (n-i+2)(n-i+1)w_iw_{n-i+2}
 \nonumber\\[-1mm]
 &\hspace{48mm}
 -2(i+1)(n-i+1)w_{i+1}w_{n-i+1}\Big]\,.
 \label{eq:direct-q-recurrence}
\end{align}
The induction is triangular in the order $w_2,q_3,w_3,q_4,\ldots$: the $w_{n+1}$ step uses $q_{n+1}$, whereas the $q_{n+3}$ step uses $w_{n+2}$.
Consequently no unknown coefficient is used to bound itself.

For completeness, the implementation uses the equivalent Schwarzschild--Bach coefficients
\begin{equation}
 w_0=1\,,\quad w_n=1+b\alpha_n\ (n\ge1)\,,\qquad
 Q=(1-z)^2+3b\sum_{n\ge1}(-1)^n\gamma_nz^n \,.
 \label{eq:wn}
\end{equation}
For a general dimensionless horizon radius $\rho_h$, use the normalized function $W=\Omega/\bar r_h$, $z=1+\bar r_h r$, and the coupling $\kappa=k/\bar r_h^2=1/(2\rho_h^2)$.
In that paragraph only, $W(0)=1$ denotes normalization to the horizon radius; $m_2\bar r=\rho_h W$.
Their initial values are
\begin{equation}
 \alpha_0=0\,,\quad\alpha_1=\gamma_1=1\,,\qquad
 \gamma_2=\frac{4-\rho_h^2+3b}{3}\,,
 \label{eq:recurrence-initial}
\end{equation}
and, for $j\ge2$,
\begin{align}
 \alpha_j=\frac{1}{j^2}\Big[&
 (2j^2-2j+1)\alpha_{j-1}-(j-1)^2\alpha_{j-2}
 \nonumber\\
 &-3\sum_{i=1}^{j}(-1)^i\gamma_i(1+b\alpha_{j-i})
 \left(j(j-i)+\frac{i(i+1)}{6}\right)\Big]\,,
 \label{eq:alpha-recurrence}\\
 \gamma_{j+1}=\frac{(-1)^j}{\kappa(j+2)(j+1)j(j-1)}
 &\sum_{i=0}^{j-1}
 [\alpha_i+\alpha_{j-i}(1+b\alpha_i)](j-i)(j-1-3i)\,.
 \label{eq:gamma-recurrence}
\end{align}
Only $\rho_h=1$ is used in the certified integration.
Substitution of~\eqref{eq:wn} into these two recurrences yields~\eqref{eq:direct-w-recurrence}--\eqref{eq:direct-q-recurrence}.

The remaining field equation is also satisfied.
With derivatives in $z$, define its residual by
\begin{equation}
 \mathcal E_h=
 WW_z\mathcal H_z+3W_z^2\mathcal H+W^2
 -\frac{\kappa}{3}\left(\mathcal H_z\mathcal H_{zzz}
       -\frac12\mathcal H_{zz}^2+2\right)\,.
 \label{eq:horizon-constraint-residual}
\end{equation}
The first equation gives the identity
\begin{equation}
 \frac{d\mathcal E_h}{dz}
 =6W_z\left[\mathcal H W_{zz}+\mathcal H_zW_z+
                  \frac{1}{6}(\mathcal H_{zz}+2)W\right]=0\,.
 \label{eq:horizon-constraint-identity}
\end{equation}
At $z=0$ the left side of the constraint equation is $-b$, while its right side is $q_2-q_1^2/3+1/3=-b$.
Thus~\eqref{eq:direct-initial}, or equivalently the value of $\gamma_2$, sets $\mathcal E_h(0)=0$.
After convergence is established below, the residual is identically zero on the analytic horizon solution and therefore at the series--core join.
This argument uses the regular horizon variables, rather than the singular $t=1$ form of the core equations.

\subsection{Explicit majorant induction}

Set $L=10$, $B=5$ and
\begin{equation}
 U_n=\frac{L^n}{(n+1)^2}\,,\qquad V_n=\frac{BL^n}{(n+1)^4}\,.
 \label{eq:majorant-repeat}
\end{equation}
For $|b|\le37/100$ the five initial bounds follow from
\begin{equation}
 |w_0|=1\,,\quad |w_1|\le\frac{137}{100}\le\frac{10}{4}\,,\quad
 |q_0|=1\,,\quad |q_1|\le\frac{311}{100}\le\frac{50}{16}\,,\quad
 |q_2|\le\frac{25207}{10000}\le\frac{500}{81}\,.
 \label{eq:majorant-initial-check}
\end{equation}
Assuming the preceding bounds in the triangular order, the $q_{n+3}$ step has ratio at most
\begin{align}
 \frac{|q_{n+3}|}{V_{n+3}}
 &\le r_q(n):=\frac{6}{BL}\,
 \frac{(n+4)^3}{(n+1)(n+2)(n+3)}\,S_n\,,
 \label{eq:q-majorant-ratio}\\
 S_n&=\sum_{i=0}^{n}\left[
 \frac{(n-i+2)(n-i+1)}{(i+1)^2(n-i+3)^2}
 +\frac{2(i+1)(n-i+1)}{(i+2)^2(n-i+2)^2}\right]\,.
 \nonumber
\end{align}
For the $w$ step it is useful to keep the five convolution terms of~\eqref{eq:horizon-direct-trace} separate.
With $\phi_n=(n+2)^2/(n+1)^2$,
\begin{equation}
 \frac{|w_{n+1}|}{U_{n+1}}
 \le r_w(n):=\phi_n(A_n+B_n+C_n+D_n+E_n+F_n)\,,
 \label{eq:w-majorant-ratio}
\end{equation}
where empty sums vanish and
\begin{align}
 A_n&=5\sum_{i=1}^{n-1}
 \frac{(n+1-i)(n-i)}{(i+1)^4(n+2-i)^2}\,,
 &
 B_n&=5\sum_{i=1}^{n}
 \frac{n+1-i}{(i+1)^4(n+2-i)^2}\,,
 \nonumber\\
 C_n&=5\sum_{i=1}^{n}
 \frac{i(n+1-i)}{(i+1)^4(n+2-i)^2}\,,
 &
 D_n&=\frac{1}{30(n+1)^2}\,,
 \nonumber\\
 E_n&=\frac{5}{3}\sum_{i=0}^{n}
 \frac{i+1}{(i+2)^4(n-i+1)^2}\,,
 &
 F_n&=\frac{5}{6}\sum_{i=0}^{n-1}
 \frac{(i+2)(i+1)}{(i+3)^4(n-i)^2}\,.
 \label{eq:w-six-sums}
\end{align}
The factors of $L$ cancel in the five convolutions; the isolated $W/3$ term gives $D_n$.
Exact rational evaluation gives
\begin{equation}
 \max_{0\le n<100}r_q(n)=\frac{4}{9}\,,\qquad
 \max_{1\le n<100}r_w(n)=\frac{16817}{29160}\,.
 \label{eq:majorant-finite-maxima}
\end{equation}

The following estimates cover all remaining indices.
The integral test gives
\begin{equation}
 \zeta(2)<\frac{5}{3}\,,\qquad
 \zeta(3)-1<\frac{1}{4}\,,\qquad
 \zeta(4)-1<\frac{1}{12}\,.
 \label{eq:rational-zeta-bounds}
\end{equation}
For the first and third inequalities one may sum through $k=5$ and add respectively $1/5$ and $1/(3\,5^3)$; for the middle one, $1/2^3+\int_2^\infty x^{-3}\,dx=1/4$.
The first summand of $S_n$ has sum below $5/3$.
Writing $k=i+2$ in the second gives
\begin{equation}
 S_n<\frac{5}{3}+\frac{4(H_{n+2}-1)}{n+4}
 \le \frac{5}{3}+\frac{2}{5}=\frac{31}{15}\,,\qquad n\ge100\,.
 \label{eq:q-high-convolution-bound}
\end{equation}
The last inequality can be checked without transcendental arithmetic: $(H_{102}-1)/104<1/10$ by an exact rational sum, and $(H_{n+2}-1)/(n+4)$ decreases thereafter, since $H_{n+2}-1\ge13/12>(n+4)/(n+3)$.

For the six $w$ sums, the required bounds are
\begin{align}
 A_n&<\frac{5}{12}\,,
 &B_n&\le \frac{5}{6(n+4)}+\frac{8}{(n+2)^3}\,,
 \nonumber\\
 C_n&\le \frac{5}{2(n+4)}+\frac{4}{(n+2)^2}\,,
 &D_n&=\frac{1}{30(n+1)^2}\,,
 \nonumber\\
 E_n&\le \frac{5}{3(n+2)^2}+\frac{200}{9(n+2)^3}\,,
 &F_n&\le \frac{100}{9(n+1)^2}\,.
 \label{eq:w-high-convolution-bounds}
\end{align}
For $A_n$, bound its numerator by $(n+2-i)^2$ and sum $5/(i+1)^4$.
For $B_n$ and $C_n$, use $(j+1)/(j+2)^2\le1/(j+3)$ and put $k=i+1$, $M=n+4$.
The resulting sums are bounded by $5\sum k^{-4}(M-k)^{-1}$ and $5\sum k^{-3}(M-k)^{-1}$ respectively, with $k\ge2$ and $M-k\ge3$.
On $k\le M/2$, use $(M-k)^{-1}\le2/M$ and~\eqref{eq:rational-zeta-bounds}.
On $k>M/2$, the integral test and $\lfloor M/2\rfloor\ge(M-2)/2$ give the respective bounds $40/[9(M-2)^3]\le8/(M-2)^3$ and $10/[3(M-2)^2]\le4/(M-2)^2$.

For $E_n$, put $k=i+2$, $l=n-i+1$, so $k+l=n+3$, and bound $(k-1)/k^4$ by $k^{-3}$.
On $k\le(n+2)/2$, use $l^{-2}\le4/(n+2)^2$ and $\sum_{k\ge2}k^{-3}<1/4$; on the complementary part use $k^{-3}\le8/(n+2)^3$ and $\sum_{l\ge1}l^{-2}<5/3$.
For $F_n$, put $k=i+3$, $l=n-i$ and bound $(k-1)(k-2)/k^4$ by $k^{-2}$.
Splitting the sum at $(n+3)/2$ gives $F_n<100/[9(n+3)^2]\le100/[9(n+1)^2]$.

Substituting~\eqref{eq:q-high-convolution-bound} and~\eqref{eq:w-high-convolution-bounds} at $n=100$ yields $r_q(n)<0.263$ and $r_w(n)<0.460$ for every $n\ge100$.
Indeed, each term of the $w$ upper bound and $\phi_n$ decreases, as does the $q$ prefactor $\prod_{j=1}^{3}(n+4)/(n+j)$.
Together with~\eqref{eq:majorant-finite-maxima}, this completes the induction and proves Proposition~\ref{prop:horizon-majorant}.
The coefficient bounds imply normal convergence for $|z|<1/10$, including every fixed derivative on a smaller disk, so termwise substitution into the equations and the constraint identity is justified.

\subsection{Differentiated and parameter dependent remainders}
\label{app:horizon-remainders}

For $j\ge0$, define the positive majorant tails
\begin{equation}
 T^W_j(N,z_*)=\sum_{n>N}\frac{n^{\underline j}10^n z_*^{\,n-j}}{(n+1)^2}\,,
 \qquad
 T^Q_j(N,z_*)=5\sum_{n>N}\frac{n^{\underline j}10^n z_*^{\,n-j}}{(n+1)^4}\,.
 \label{eq:general-horizon-tail}
\end{equation}
They bound the omitted $j$th derivatives uniformly on $|z|\le z_*$.
For $N=180$, $z_*=7/200$, and $j\le5$, the ratio of successive terms is bounded by
\begin{equation}
 \frac{7}{20}\frac{n+1}{n+1-j}
       \left(\frac{n+1}{n+2}\right)^p<\frac{9}{25}\,,
 \qquad n\ge181\,,\quad p=2\ \hbox{or}\ 4 \,.
 \label{eq:horizon-tail-ratio}
\end{equation}
A finite sum followed by a geometric tail therefore gives exact rational bounds for all the jets used in the initialization.

The coefficients are polynomials in $b$.
The complex disk of radius $\delta_b=0.0066$ about every real $b$ in the parameter box lies inside $|b|<0.37$.
Cauchy's inequalities bound the first and second $b$ derivatives of each remainder by $T_j/\delta_b$ and $2T_j/\delta_b^2$.
At the join, $z=R/20$, so differentiation in the scaled parameters $\xi_b,\xi_R$ multiplies these bounds by $s_b=10^{-41}$ and $s_R/20=10^{-27}/20$, respectively.
For every $W$ or $Q$ jet of orders $0,\ldots,3$, a common componentwise remainder bound is
\begin{equation}
\begin{array}{c|cccccc}
 &v&\partial_{\xi_b}&\partial_{\xi_R}&
 \partial_{\xi_b}^2&\partial_{\xi_b}\partial_{\xi_R}&
 \partial_{\xi_R}^2\\ \hline
 \text{absolute bound}&
 10^{-75}&10^{-114}&10^{-100}&10^{-152}&10^{-138}&10^{-124}
\end{array}\,.
 \label{eq:horizon-common-jet-tail}
\end{equation}
For example, the six unrounded bounds for a jet of order $j$ are
\begin{equation}
 T_j\,,\quad \frac{s_bT_j}{\delta_b}\,,\quad \frac{s_RT_{j+1}}{20}\,,
 \quad \frac{2s_b^2T_j}{\delta_b^2}\,,\quad
 \frac{s_bs_RT_{j+1}}{20\delta_b}\,,\quad \frac{s_R^2T_{j+2}}{400}\,.
 \label{eq:horizon-cauchy-tail-formula}
\end{equation}
The ancillary rational certificates check~\eqref{eq:horizon-tail-ratio}--\eqref{eq:horizon-cauchy-tail-formula}, the finite majorant inequalities, and the equivalence of the two coefficient recurrences.
The all index estimates themselves are the inequalities derived above.

\section{Centered tail fixed point estimates}
\label{app:tail}

\subsection{The approximation and the centered operator}

Let $\bar A=\Ai+\alpha_0$ and $\bar C=\beta+c_0$ denote the approximate physical fields.
For a polynomial $P(r)$, define
\begin{equation}
 \mathcal D P=(-a_0+\nu_0r)P-r^2P_r\,,
 \qquad (g(t)P(1/t))'=g(t)(\mathcal DP)(1/t)\,.
 \label{eq:tail-polynomial-derivative}
\end{equation}
The ancillary file contains finite rational coefficient lists $P_A,P_C$; $P_p=\mathcal DP_A$, $P_d=\mathcal DP_C$, $P_Y=\mathcal D^2P_C$, and $P_J=\mathcal D^3P_C$.
Divide both stored polynomials by $(P_J-a_0P_Y)(1/T)$ before forming these derivative polynomials.
The divisor is nonzero and the resulting normalization is checked exactly.
No convergence claim about the infinite generalized Yukawa expansion is needed.

Set $H=6tF/A^3-a_0^2Y$ and define the differential residuals
\begin{equation}
 R_F=p_0'-F(\bar A,p_0,\bar C,d_0,Y_0;t)\,,\qquad
 R_J=J_0'-\frac{6t}{\bar A^3}F(\bar A,p_0,\bar C,d_0,Y_0;t)\,.
 \label{eq:tail-residual-definition}
\end{equation}
The four kinematic residuals vanish identically.
For $Z=(z_\alpha,z_p,z_c,z_d,z_u,z_v)$ let $\delta Y=(z_u-z_v)/(2a_0)$, and let $\Delta F$, $\Delta H$ denote the changes of $F,H$ from the approximation to the corrected fields.
Writing $Q_F=\Delta F-R_F$ and $Q_H=\Delta H-R_J$, the centered operator is
\begin{align}
 (\mathcal TZ)_\alpha(t)&=\int_t^\infty(s-t)Q_F(s)\,\dd s\,,&
 (\mathcal TZ)_p(t)&=-\int_t^\infty Q_F(s)\,\dd s\,,
 \label{eq:tail-integrals-A}\\
 (\mathcal TZ)_c(t)&=\int_t^\infty(s-t)\delta Y(s)\,\dd s\,,&
 (\mathcal TZ)_d(t)&=-\int_t^\infty\delta Y(s)\,\dd s\,,
 \label{eq:tail-integrals-C}\\
 (\mathcal TZ)_u(t)&=-\int_t^\infty\e^{-a_0(s-t)}Q_H(s)\,\dd s\,,
 \label{eq:tail-integral-u}\\
 (\mathcal TZ)_v(t)&=\int_T^t\e^{-a_0(t-s)}Q_H(s)\,\dd s\,.
 \label{eq:tail-integral-v}
\end{align}
The last equation imposes $z_v(T)=0$; the first five impose the required future boundary conditions.
Exponential envelopes make all improper integrals convergent.
Their derivatives recover $\alpha'=p$, $p'=F$, $c'=d$, $d'=Y$ and $u_0'=a_0u_0+H$, $v_0'=-a_0v_0+H$.

\subsection{Residual envelopes on the entire half line}

For $P=\sum_jP_jr^j$ put $[P]_T=\sum_j|P_j|T^{-j}$ and $\bar\eta=|\eta_0|+r_\eta$.
Since $\nu_0<0$, $g(t)\leq\e^{-a_0(t-T)}$.
Define $s_A,s_p,s_c,s_d,s_Y,s_J$ as $\bar\eta[P]_T$ for the six physical shape polynomials.
These are uniform exponential envelope constants.
For this subsection write
\begin{equation}
 A_-^0=\Aizero-r_A-s_A\,,\quad A_+^0=\Aizero+r_A+s_A\,,
 \quad D_0=T+\beta_0-r_\beta\,,\quad D_*=D_0-s_c\,.
 \label{eq:tail-residual-denominators}
\end{equation}
The linearization at constant fields $(A_*,\beta_*)$ is
\begin{equation}
 F_{\rm lin}^{A_*,\beta_*}
 =-\left(\frac1t+\frac1{t+\beta_*}\right)p
 +\frac{A_*Y}{6(t+\beta_*)}
 +\frac{A_*d}{3t(t+\beta_*)}\,.
 \label{eq:tail-linear-F}
\end{equation}
In particular, the derivative with respect to $\beta_*$ of the coefficient of $p$ is $+(t+\beta_*)^{-2}$.

The residual is split into the linear truncation error at $(\Aizero,\beta_0)$, the change of these constant parameters, and the nonlinear remainder.
For the first part define the exact polynomials
\begin{align}
 N_L&=-r(2+\beta_0r)P_p+
       \tfrac13\Aizero r^2P_d+\tfrac16\Aizero rP_Y\,,\\
 N_F&=(1+\beta_0r)\mathcal DP_p-N_L\,,\qquad
 N_J=r(1+\beta_0r)\mathcal DP_J-6\Aizero^{-3}N_L\,.
 \label{eq:tail-residual-polynomials}
\end{align}
Since $N_J(0)=0$, valid envelope constants for the linear residuals are
\begin{equation}
 K_{F,l}=\frac{\bar\eta[N_F]_T}{1+\beta_0/T}\,,\qquad
 K_{J,l}=\frac{\bar\eta\sum_{j\geq1}|(N_J)_j|T^{1-j}}
                       {1+\beta_0/T}\,.
 \label{eq:tail-linear-residual-bounds}
\end{equation}
All coefficients, including the rounding errors of the proposed approximation, enter these exact rational evaluations.

For the parameter and nonlinear parts let $\beta_*=\max(|\beta_0-r_\beta|,|\beta_0+r_\beta|)$ and set
\begin{align}
 R_*&=\frac{r_A}{D_0}
       +\frac{\Aizero r_\beta}{D_0(T+\beta_0)}\,,\\
 V_*&=\frac{s_A}{D_*}
       +\frac{(\Aizero+r_A)s_c}{D_*D_0}\,,\\
 K_{F,p}&=\frac{s_YR_*}{6}+\frac{s_dR_*}{3T}
             +\frac{s_pr_\beta}{D_0^2}\,,\\
 K_{F,n}&=\frac{2s_p^2}{A_-^0}+\frac{s_ps_d}{D_*}
       +\frac{s_ps_c}{D_*D_0}+\frac{s_YV_*}{6}+\frac{s_dV_*}{3T}\,,\\
 K_L&=\frac{(2T+\beta_*)s_p}{TD_0}
       +\frac{(\Aizero+r_A)s_d}{3TD_0}
       +\frac{(\Aizero+r_A)s_Y}{6D_0}\,.
 \label{eq:tail-remaining-residual-bounds}
\end{align}
Here $K_{F,p}$ multiplies $\e^{-a_0(t-T)}$, $K_{F,n}$ multiplies $\e^{-2a_0(t-T)}$, and $K_L$ bounds either relevant linearization with one exponential factor.
Let $I_p$ bound $|\Ai^{-3}-\Aizero^{-3}|$ and let $I_n$ bound $\e^{a_0(t-T)}|\bar A^{-3}-\Ai^{-3}|$.
Their rational values follow by substituting the above endpoint bounds into
\begin{equation}
 |x^{-3}-y^{-3}|=|x-y|\frac{x^2+xy+y^2}{x^3y^3}\,.
 \label{eq:tail-inverse-cube-bound}
\end{equation}
Then
\begin{equation}
 K_{J,p}=6T\left(\frac{K_{F,p}}{(A_-^0)^3}+K_LI_p\right)\,,\qquad
 K_{J,n}=6T\left(\frac{K_{F,n}}{(A_-^0)^3}+K_LI_n\right)\,.
 \label{eq:tail-J-residual-bounds}
\end{equation}
These constants retain the exponential rate $a_0$.
For the parameter terms this follows from their rational coefficients, rather than from an additional exponential factor: $t/(t+\beta_-)$, $t/(t+\beta_-)^2$, and $t/[(t+\beta_-)(t+\beta_0)]$ decrease on $[T,\infty)$, where $\beta_-=\beta_0-r_\beta<0$ and $\beta_-\beta_0<T^2$.
Consequently $t$ times the parameter envelope and $t$ times the linear envelope are bounded by $T$ times their values above.
For the nonlinear terms, $a_0T>1$ gives $t\e^{-2a_0(t-T)}\leq T\e^{-a_0(t-T)}$.
Thus $K_F=K_{F,l}+K_{F,p}+K_{F,n}$ and $K_J=K_{J,l}+K_{J,p}+K_{J,n}$ give~\eqref{eq:RF-bound}--\eqref{eq:RJ-bound}.

\subsection{Derivative bounds and the contraction matrix}

On the correction ball of radius $\varepsilon=5\times10^{-30}$ let
\begin{align}
 A_-&=\Aizero-r_A-s_A-q_\alpha\varepsilon\,,&
 A_+&=\Aizero+r_A+s_A+q_\alpha\varepsilon\,,\\
 C_-&=\beta_0-r_\beta-s_c-q_c\varepsilon\,,&
 C_+&=\beta_0+r_\beta+s_c+q_c\varepsilon\,,\\
 P&=s_p+q_p\varepsilon\,,&D_1&=s_d+q_d\varepsilon\,,\\
 Y_1&=s_Y+s_J/a_0+(q_u+q_v)\varepsilon/(2a_0)\,,\nonumber\\
 D&=T+C_-\,,&\mathcal R&=T/D\,.
 \label{eq:tail-ball-envelopes}
\end{align}
Also write $C_* =\max(|C_-|,|C_+|)$.
The derivative variables are bounded by their indicated constants times $\e^{-\sigma(t-T)}$; the bound $Y_1$ follows conservatively from $|u_0|,|v_0|\leq(s_J+a_0s_Y)\e^{-a_0(t-T)}$ for the approximation.
We have $A_->1$ and $-1<C_-\leq C_+<0$.
The following five entries bound the absolute partial derivatives of $F$ in the physical order $(A,p,C,d,Y)$:
\begin{align}
 f_A&=2P^2/A_-^2+Y_1/(6D)+D_1/(3TD)\,,\\
 f_p&=4P/A_-+D_1/D+C_* /(TD)+2/D\,,\\
 f_C&=A_+Y_1/(6D^2)+A_+D_1/(3TD^2)+PD_1/D^2+P/D^2\,,\\
 f_d&=A_+/(3TD)+P/D\,,\qquad f_Y=A_+/(6D)\,.
 \label{eq:tail-F-derivative-bounds}
\end{align}
For example, $F_C=-AY/(6(t+C)^2)-Ad/(3t(t+C)^2)+pd/(t+C)^2+p/(t+C)^2$; the final term combines the two apparent terms proportional to $p$ whose coefficients depend on $C$ in $F$.
Let
\begin{equation}
 F_* =2P^2/A_-+A_+Y_1/(6D)+A_+D_1/(3TD)
          +PD_1/D+C_*P/(TD)+2P/D\,.
\end{equation}
Using $H_i=6tA^{-3}F_i-18tA^{-4}F\delta_{iA}-a_0^2\delta_{iY}$, valid bounds for the physical partial derivatives of $H$ are
\begin{align}
 h_A&=12TP^2/A_-^5+\mathcal RY_1/A_-^3+2D_1/(A_-^3D)+18TF_*/A_-^4\,,\\
 h_p&=24TP/A_-^4+6\mathcal RD_1/A_-^3+6C_* /(A_-^3D)+12\mathcal R/A_-^3\,,\\
 h_C&=TY_1/(A_-^2D^2)+2D_1/(A_-^2D^2)
          +6TPD_1/(A_-^3D^2)+6TP/(A_-^3D^2)\,,\\
 h_d&=2/(A_-^2D)+6\mathcal RP/A_-^3\,,\\
 h_Y&=\max\left(\left|A_+^{-2}-a_0^2\right|,
                    \left|\mathcal R A_-^{-2}-a_0^2\right|\right)\,.
 \label{eq:tail-H-derivative-bounds}
\end{align}
The factors $t$ multiplying decaying quantities are controlled by $\sigma T>1$; the rational factors $t/(t+C)$ and $t/(t+C)^2$ are largest at $T$ for $C<0$.
In the last line the lower endpoint includes $t=\infty$:
\begin{equation}
 A_+^{-2}-a_0^2\leq H_Y=\frac{t}{A^2(t+C)}-a_0^2
 \leq\mathcal R A_-^{-2}-a_0^2\,.
 \label{eq:tail-HY-interval}
\end{equation}

Put $k_0=(2a_0)^{-1}$, $\boldsymbol f=(f_A,f_p,f_C,f_d,k_0f_Y,k_0f_Y)$ and $\boldsymbol h=(h_A,h_p,h_C,h_d,k_0h_Y,k_0h_Y)$.
The elementary weighted integral bounds give the nonnegative majorant matrix
\begin{equation}
 \begin{aligned}
 M_{1j}&=f_j/\sigma^2\,,&M_{2j}&=f_j/\sigma\,,\\
 M_{3j}&=k_0(\delta_{j5}+\delta_{j6})/\sigma^2\,,&
 M_{4j}&=k_0(\delta_{j5}+\delta_{j6})/\sigma\,,\\
 M_{5j}&=h_j/(a_0+\sigma)\,,&M_{6j}&=h_j/(a_0-\sigma)\,,
 \end{aligned}
 \label{eq:tail-majorant-matrix}
\end{equation}
and $\norm{D\mathcal T}\leq\max_i q_i^{-1}\sum_jM_{ij}q_j$.
For example, the four scalar kernel bounds are $\sigma^{-1}$, $\sigma^{-2}$, $(a_0+\sigma)^{-1}$, and $(a_0-\sigma)^{-1}$ for the future first integral, future double integral, unstable future integral, and stable past integral, respectively.
Exact rational evaluation and upward rounding give the six row sums
\begin{equation}
 \begin{aligned}
 (&0.581554029033969,\ 0.581554029033969,\ 0.581609542790419\,,\\
 &0.581609542790419,\ 0.581970658254018,\ 0.582081478039319)\,.
 \end{aligned}
 \label{eq:tail-row-sums}
\end{equation}

The residual envelopes give componentwise centered defects bounded by
\begin{equation}
 \left(K_F/a_0^2,\ K_F/a_0,\ 0,\ 0,\
       K_J/(2a_0),\ K_J/[e_-(a_0-\sigma)]\right)\,,
 \label{eq:tail-component-defects}
\end{equation}
where $e_-=2.718281828459045<\sum_{j=0}^{18}1/j!<e$.
The final entry uses $\sup_{\tau\geq0}\tau\e^{-(a_0-\sigma)\tau} =1/[e(a_0-\sigma)]$.
Dividing by the weights proves~\eqref{eq:tail-defect}.
Finally $A\geq A_-$, $t+C\geq D$, and $1-tp/A\geq1-T(s_p+q_p\varepsilon)/A_-$, since $a_0T,\sigma T>1$.
Also $B=1+(1+C)/(t-1)>1$; its global infimum is $1$ at infinity, whereas $(T+C_-)/(T-1)$ bounds $B(T)$ only.

\section{Validated compact core flow}
\label{app:core}

This appendix specifies the enclosure propagated by the core executable.
Write the first order system and parameter coordinates as
\begin{align}
 X'&=\mathcal F(t,X)\,, & X&=(A,p,C,d,Y,J)\,,\nonumber\\
 b&=b_0+10^{-41}\xi_b\,, & R&=R_0+10^{-27}\xi_R\,,\nonumber\\
 |\xi_b|,|\xi_R|&\le r\,, & r&=5\times10^{-4}\,.
 \label{eq:core-parameter-coordinates}
\end{align}
The state at $\xi=0$, its two first variations, and a bound on its three second variations are enclosed separately.

\subsection{Validated horizon initialization}

Put $t_0=20/19$ and denote by $W_j,Q_j$ the $j$th $z$ derivatives evaluated at $z=R/20$.
The exact initialization map is
\begin{align}
 A&=\frac{t_0}{W_0}\,,
 &p&=\frac{1}{W_0}-\frac{R W_1}{t_0W_0^2}\,,
 \nonumber\\
 C&=\frac{Q_0t_0^2(t_0-1)}{R}-t_0\,,
 &d&=\frac{(3t_0^2-2t_0)Q_0}{R}+(t_0-1)Q_1-1\,,
 \nonumber\\
 Y&=\frac{(6t_0-2)Q_0}{R}
       +\left(4-\frac{2}{t_0}\right)Q_1
       +\frac{R(t_0-1)}{t_0^2}Q_2\,,
 \nonumber\\
 J&=\frac{6Q_0}{R}+\frac{6Q_1}{t_0}
       +\frac{3RQ_2}{t_0^2}
       +\frac{R^2(t_0-1)}{t_0^4}Q_3\,.
 \label{eq:core-initial-map}
\end{align}
The horizon polynomials of degree 180 and~\eqref{eq:general-horizon-tail} give the center state enclosure.
Automatic differentiation of the same rational map gives the first and second parameter derivatives.

The $10^{-70}$ padding applied to the truncated first order dual data follows from the following bounds.
For $|z|\le0.035$, the coefficient majorants imply
\begin{equation}
 |W_0|<2,\ |W_1|<16,\ |W_2|<200,\ |W_3|<6000\,,\qquad
 |Q_j|<(8,77,770,7700)_j\,,
 \label{eq:initial-jet-size}
\end{equation}
and both fourth derivatives are smaller than $10^6$.
The full and truncated data lie in these boxes with $W_0\ge1/2$ and $0.69\le R\le0.70$.
Denote~\eqref{eq:core-initial-map} by $G$ and let $y$ comprise its eight jet arguments and $R$.
Direct rational interval differentiation gives
\begin{equation}
 \max_i\sum_{\text{eight jets }u}|G_{i,u}|<10^3\,,\qquad
 \max_i\sum_{u,v}|G_{i,uv}|<10^6\,.
 \label{eq:initial-chain-rule}
\end{equation}
The ancillary script \texttt{horizon\_initialization\_bounds.py} checks these two small rational calculations explicitly.
Cauchy's estimate from Appendix~\ref{app:horizon-remainders} and $z_{\xi_R}=10^{-27}/20$ give $|\partial_{\xi_a}y|_\infty<10^{-20}$.
Equation~\eqref{eq:horizon-common-jet-tail} then bounds the error in $G$ by $10^{-72}$ and the error in either first dual coefficient by
\begin{equation}
 10^3\,10^{-100}+10^6\,10^{-75}\,10^{-20}
 <1.00000001\times10^{-89}\,.
 \label{eq:initial-dual-padding}
\end{equation}
Thus $10^{-70}$ is a conservative value and first derivative padding, not an assumed truncation accuracy.

The second order initialization evaluates the full parameter box, including the six remainder components in~\eqref{eq:horizon-common-jet-tail}.
With the initial positive weights defined below, exact rational postprocessing of that output gives
\begin{equation}
 \begin{aligned}
 \|X_{\xi_b\xi_b}(t_0)\|_q&<2.820\times10^{-78}\,,\\
 \|X_{\xi_b\xi_R}(t_0)\|_q&<5.128\times10^{-67}\,,\\
 \|X_{\xi_R\xi_R}(t_0)\|_q&<2.327\times10^{-52}\,.
 \end{aligned}
 \label{eq:initial-second-variation}
\end{equation}
All three are strictly smaller than the common initial bound $B_{ab}(t_0)=10^{-50}$.
This check uses the same initial weights as the core executable, including a conservative allowance for their binary128 conversion.

\subsection{Center state Picard tube and Taylor step}

At a coarse step starting at $t_k$, let $X_k$ enclose the exact center state and let $h>0$ be its length.
The executable constructs a rectangular tube $Z$ and verifies
\begin{equation}
 X_k+[0,h]\,\mathcal F([t_k,t_k+h],Z)
       \subset\operatorname{int}Z\,.
 \label{eq:core-picard-inclusion}
\end{equation}
It also checks that the denominators $t$, $A$, and $t+C$ stay away from zero.
The vector field is smooth on this box; the inclusion and the standard continuation argument establish existence throughout the step, and local Lipschitz continuity gives uniqueness.

Let $c_j(X,t)$ denote the $j$th Taylor coefficient obtained by substituting a formal time series into $X'=\mathcal F(t,X)$.
For the step of order 56, the Lagrange remainder gives
\begin{equation}
 X(t_k+h)\in
 \sum_{j=0}^{55}c_j(X_k,t_k)h^j+
 c_{56}(Z,[t_k,t_k+h])h^{56}\,.
 \label{eq:core-order56-step}
\end{equation}
Every coefficient operation and endpoint evaluation is an MPFR interval operation at 512 bits.
The same formula at an interior offset encloses the \emph{value} of the center orbit at a sensitivity substep.
Its local Taylor coefficients are then recomputed from the differential equation.
No derivative is taken of an unspecified remainder coefficient.

There are 485 coarse steps and 1590 sensitivity substeps.
The nominal coarse step lengths in successive ranges are
\begin{equation}
\begin{array}{c|ccccc}
t_k&[t_0,1.2)&[1.2,1.5)&[1.5,2)&[2,4)&[4,40]\\ \hline
h_{\rm nom}&0.005&0.01&0.02&0.05&0.1
\end{array}\,.
\label{eq:core-coarse-schedule}
\end{equation}
Their subdivisions use nominal maximum lengths $0.0005,0.002,0.005,0.01,0.025,0.05$ on the successive ranges with endpoints $t_0,1.2,1.5,2,4,10,40$.
The final step is shortened to end at $t=40$.
These numbers specify step choices; interval times enclose the exact chosen rational steps and their endpoints.

\subsection{First variations and weighted error propagation}

Write $S_a(t)=\partial_{\xi_a}X(t,0)$, $a=b,R$, so
\begin{equation}
 S_a'=M(t)S_a\,,\qquad M(t)=D_X\mathcal F(t,X(t,0))\,.
 \label{eq:core-first-variation}
\end{equation}
At the start of a substep the representation is $S_a\in s_a+[-E_aq,E_aq]$, where $s_a$ is a stored binary128 vector, $q_i>0$, and
\begin{equation}
 \|v\|_q:=\max_i\frac{|v_i|}{q_i}\,.
 \label{eq:core-weighted-norm}
\end{equation}
Every weight is an explicitly supplied positive decimal rational.
They may be chosen using an approximate orbit: their accuracy as eigenvectors is not a premise of the proof.
For a verified Jacobian tube $\boldsymbol M$, the executable computes the following outward upper bound on the logarithmic norm:
\begin{equation}
 \mu_q=\max_i\left[
       \overline{\boldsymbol M}_{ii}
       +\sum_{j\ne i}|\boldsymbol M_{ij}|\,\frac{q_j}{q_i}\right]\,.
 \label{eq:core-lognorm}
\end{equation}
Consequently an initial error of weighted size $E_a$ grows by at most $e^{\mu_qh}$ during the substep.
Here $\mu_q>0$: the first row of the variational system is $S_{a,A}'=S_{a,p}$, and its comparison row is $q_p/q_A>0$.

The first variation time series has order $K=12$.
If $M_j$ enclose the Taylor coefficients of the true center Jacobian, the interval coefficients initialized at the represented center obey
\begin{equation}
 V_{a,0}=s_a\,,\qquad
 V_{a,n+1}=\frac{1}{n+1}\sum_{j=0}^{n}M_jV_{a,n-j}\,.
 \label{eq:core-sensitivity-recurrence}
\end{equation}
A candidate tube $\mathcal S_a$ is validated by the strict inclusion
\begin{equation}
 \sum_{j=0}^{11}V_{a,j}[0,h]^j+
 [-q e^{\mu_qh}E_a,q e^{\mu_qh}E_a]+
 [-h^{12}|V_{a,12}^{\rm tube}|,h^{12}|V_{a,12}^{\rm tube}|]
 \subset\operatorname{int}\mathcal S_a\,,
 \label{eq:core-sensitivity-tube}
\end{equation}
with positive outward inflation in the tube iteration.
The remainder coefficient is computed from~\eqref{eq:core-sensitivity-recurrence} using $\mathcal S_a$ and the Jacobian jets over the center state tube.
The polynomial in this inclusion uses the enclosing coefficients $M_j$, not selected midpoint coefficients.

At the substep endpoint, the same enclosing polynomial and its twelfth order remainder give an interval vector $\mathcal V_a(h)$ for the solution initialized at $s_a$.
A predictor using midpoint coefficients may select the new stored center $s_a^+$, because its error is separately bounded:
\begin{equation}
 E_a^+=\tau_q\left[
 e^{\mu_qh}E_a+
 \max_i\frac{\max\{|(\mathcal V_a)_i^- -(s_a^+)_i|,
                |(\mathcal V_a)_i^+ -(s_a^+)_i|\}}{q_i}
 \right]\,,
 \qquad
 \tau_q=\max_i\frac{q_i}{q_i^+}\,.
 \label{eq:core-error-update}
\end{equation}
All differences, maxima, divisions, exponentials, and weight changes in this bound are enclosed outward.
Distances are measured from the \emph{represented} center to both endpoints; half the interval width is not substituted for this radius.
The factor $\tau_q$ accounts explicitly for every change of weights.

\subsection{Second variations and the uniform parameter bootstrap}

Let $T_{ab}=\partial_{\xi_a}\partial_{\xi_b}X$ throughout the parameter box.
Its equation is
\begin{equation}
 T_{ab}'=D_X\mathcal F\,T_{ab}
       +D_X^2\mathcal F[S_a,S_b]\,.
 \label{eq:core-second-variation}
\end{equation}
The uniform bootstrap assumes that the perturbed orbit remains within $\delta_X=10^{-24}$ of the center orbit in each component and that each first variation remains within $\delta_S=10^{-30}$ of its center first variation.
The Hessian is evaluated by interval automatic differentiation on $Z+[-\delta_X,\delta_X]^6$.
If $V_{a,i}=|\mathcal S_{a,i}|+\delta_S$, define
\begin{equation}
 L_{ab}=\max_i\frac{1}{q_i}
 \sum_{u,v}
 \left|\partial_{X_u}\partial_{X_v}\mathcal F_i\right|
 V_{a,u}V_{b,v}\,.
 \label{eq:core-hessian-source}
\end{equation}
Let $\mu_2$ be the analogue of~\eqref{eq:core-lognorm} on that enlarged state box and $g_2=\max\{1,e^{\mu_2h}\}$.
If $\|T_{ab}(t_k,\xi)\|_q\le B_{ab}$, variation of constants gives the uniform substep bound
\begin{equation}
 \widehat B_{ab}=g_2 B_{ab}+h g_2 L_{ab}\,.
 \label{eq:core-second-bound}
\end{equation}
Afterwards $B_{ab}^+=\tau_q\widehat B_{ab}$.
No signed second variation column or unvalidated derivative of an error radius is propagated.

The bootstrap closes by checking, for every component and substep,
\begin{align}
 q_i r(\widehat B_{bb}+\widehat B_{bR})&<\delta_S\,,&
 q_i r(\widehat B_{bR}+\widehat B_{RR})&<\delta_S\,,
 \nonumber\\
 r\bigl(|\mathcal S_{b,i}|+\delta_S+
        |\mathcal S_{R,i}|+\delta_S\bigr)&<\delta_X\,.
 \label{eq:core-bootstrap-inequalities}
\end{align}
The first two inequalities follow from the mean value formula for $S_a(t,\xi)-S_a(t,0)$; the third bounds $X(t,\xi)-X(t,0)$.
Strictness and a first exit argument justify the simultaneous state, first variation, and second variation assumptions.
The maximum ratios of the left sides to the corresponding assumed radii in the regenerated calculation are bounded above by
\begin{equation}
 0.004006962574346467\,,\qquad 0.002383324686807518\,.
 \label{eq:core-bootstrap-rebuilt}
\end{equation}
Both are far below one.

At $t=40$, conservative upper bounds for the two weighted first variation errors are
\begin{equation}
 E_b<1.401659301\times10^{-42}\,,\qquad
 E_R<5.034831333\times10^{-28}\,,
 \label{eq:first-variation-errors}
\end{equation}
and for the weighted Hessian norms,
\begin{equation}
 (B_{bb},B_{bR},B_{RR})
 <(3.794401813,\ 3.794996900,\ 4.227854415)\times10^{-30}\,.
 \label{eq:second-variation-errors}
\end{equation}
The resulting parameter representation is
\begin{equation}
 X_i(40,\xi)\in (X_c)_i+(S_b)_i\xi_b+(S_R)_i\xi_R
                  +[-\mathcal R_i,\mathcal R_i]\,,\qquad
 \mathcal R_i=\frac{q_i r^2}{2}(B_{bb}+2B_{bR}+B_{RR})\,.
 \label{eq:core-parameter-remainder}
\end{equation}
Here $X_c$ and the two $S$ columns are intervals, so their existing state and first variation errors remain included.
The complete endpoint data and the bounds $\mathcal R_i$ are supplied in the ancillary output.

Finally, positivity is checked on each center Picard tube enlarged by $2\delta_X$, which contains the complete parameter family.
It gives the lower bounds~\eqref{eq:A-core-positive}--\eqref{eq:U-core-positive}.
MPFR state endpoints and positivity minima are printed in the appropriate outward direction.
The matching parser also encloses the decimal serialization of the binary128 column centers and upper error bounds before performing exact rational arithmetic.

\section{Poincar\'e--Miranda certificate}
\label{app:poincare}

The matching calculation uses the five coordinates $x=(A,p,C,d,v_0)$, with $v_0=J-a_0Y$, and the exact rational parameter center $\theta_0$ and radii $r_i$ supplied with the certificate.
Write $\delta=\theta-\theta_0$, so that $|\delta_i|\le r_i$.
At $T=40$, the approximate tail is affine in its three parameters:
\begin{equation}
 x_{\mathrm{app}}(\Ai,\beta,\eta)
 =\bigl(\Ai+\eta L_A,\eta L_p,\beta+\eta L_C,
         \eta L_d,\eta\bigr)\,.
 \label{eq:matching-affine-tail}
\end{equation}
The $L$ coefficients are exact rational evaluations of the finite tail shape after its exact normalization $L_J-a_0L_Y=1$.
The correction from this approximation to the exact tail is bounded uniformly over the entire tail parameter box by
\begin{equation}
 r_{\mathrm t}=
 (8.9\times10^{-32},\;5.34\times10^{-32},\;
  1.5225\times10^{-29},\;9.135\times10^{-30},\;5\times10^{-30})\,.
 \label{eq:matching-tail-errors}
\end{equation}
These are the relevant component weights multiplied by the certified tail ball radius; the last entry is a conservative bound for the fixed $\eta$ coordinate.

Let $[c_0]$ enclose the central core endpoint in these five coordinates, and let $[J]$ contain the two core derivative columns and the three exact columns obtained by differentiating the negative of~\eqref{eq:matching-affine-tail}.
Let $r_{\mathrm c}$ denote the core's componentwise second order remainder over the parameter box.
Define
\begin{equation}
 \widehat\Phi_0=\operatorname{mid}[c_0]
                 -x_{\mathrm{app}}(\Aizero,\beta_0,\eta_0)\,,
 \qquad J_0=\operatorname{mid}[J]\,.
 \label{eq:matching-nominal-data}
\end{equation}
The validated core Taylor model and uniform tail ball give the enclosure
\begin{equation}
 \Phi(\theta)=\widehat\Phi_0+J_0\delta+e(\theta)\,,
 \qquad
 |e_i(\theta)|\le\epsilon_i\,,
 \label{eq:matching-affine-enclosure}
\end{equation}
where
\begin{equation}
 \epsilon_i=\operatorname{rad}[c_{0,i}]
 +\sum_{j=1}^{5}\operatorname{rad}[J_{ij}]r_j
 +r_{\mathrm c,i}+r_{\mathrm t,i}\,.
 \label{eq:matching-error-budget}
\end{equation}
Only the first two columns have interval widths.
The tail need only be continuous in its parameters; its complete departure from the affine approximation is already included in $r_{\mathrm t}$.

The matrix $P=J_0^{-1}$ is formed by rational Gaussian elimination, and the identity $PJ_0=I$ is checked exactly.
Put
\begin{equation}
 g_0=P\widehat\Phi_0\,,
 \qquad E_i=\sum_{k=1}^{5}|P_{ik}|\epsilon_k\,.
 \label{eq:matching-preconditioned-budget}
\end{equation}
Thus $G=P\Phi$ satisfies $|G_i(\theta)-g_{0,i}-\delta_i|\le E_i$.
Here $g_0$ is a nominal rational displacement, not the exact value $G(\theta_0)$; the central tail correction is included in $E$.
The face test and its remaining margin are
\begin{equation}
 |g_{0,i}|+E_i<r_i\,,
 \qquad m_i=r_i-|g_{0,i}|-E_i>0\,.
 \label{eq:face-test}
\end{equation}
Consequently $G_i<0$ on $\delta_i=-r_i$ and $G_i>0$ on $\delta_i=r_i$, irrespective of the four transverse coordinates.
The outward bounds are displayed in Table~\ref{tab:poincare-full}.
The smallest normalized margin exceeds $0.5406$, so the Poincar\'e--Miranda theorem supplies a matching zero.

\begin{table}[ht]
\centering
\begin{tabular}{@{}ccccc@{}}
\toprule
$i$ & $r_i$ & $|g_{0,i}|$ & $E_i$ & $m_i/r_i$ \\
\midrule
1 & $5.0\times10^{-45}$ & $2.1025\times10^{-47}$ & $2.2759\times10^{-45}$ & $0.540619770207$ \\
2 & $5.0\times10^{-31}$ & $3.4032\times10^{-33}$ & $2.0935\times10^{-31}$ & $0.574495623641$ \\
3 & $3.0\times10^{-29}$ & $1.9524\times10^{-31}$ & $1.1879\times10^{-29}$ & $0.597534509130$ \\
4 & $2.0\times10^{-27}$ & $1.5037\times10^{-29}$ & $8.9000\times10^{-28}$ & $0.547484807780$ \\
5 & $2.0\times10^{-29}$ & $2.2298\times10^{-32}$ & $7.0264\times10^{-30}$ & $0.647565186013$ \\
\bottomrule
\end{tabular}
\caption{Poincar\'e--Miranda face bounds from the regenerated core.
The radius is exact; the displacement and error are rounded upward, and the normalized margin downward.
The unrounded quantities are rational numbers in the certificate.}
\label{tab:poincare-full}
\end{table}

\paragraph{Transfer of decimal bounds:}
The regenerated MPFR state endpoints are printed with directed rounding.
For every decimal token parsed from the core output, the matching script nevertheless allows one unit $u$ in its last displayed decimal place: a lower endpoint is replaced by $x-u$, an upper endpoint by $x+u$, and a printed center by the interval $[x-u,x+u]$.
Printed nonnegative errors and remainders are increased by their own units before the stated one percent inflation is applied.
The bootstrap ratios are increased and positivity lower bounds decreased in the same way.
This explicitly includes nearest rounding in binary128 center and error serialization.
All subsequent operations, including unit construction, descaling of the parameter columns, inversion, and face testing, are rational.
The auxiliary manuscript data program also rounds every displayed lower bound downward and upper bound upward.

\paragraph{Constraint at both ends:}
The relation between the conserved scalar and the remaining Kundt equation is algebraic.
In dimensionless horizon coordinates, set
\begin{equation}
 \mathcal K=W W_z\mathcal H_z+3W_z^2\mathcal H+W^2
 -\frac{\kappa}{3}
 \left(\mathcal H_z\mathcal H_{zzz}
       -\frac12\mathcal H_{zz}^2+2\right)\,.
 \label{eq:remaining-kundt-residual}
\end{equation}
With $z=R(1-1/t)$, $W=t/A$ and $\mathcal H=-R^2(t+C)/t^3$, direct substitution gives $\mathcal K=-\mathcal N/(6A^4)=\mathcal E_2$.
For the analytic horizon series, $W(0)=1$, $W_z(0)=1+b$, $\mathcal H(0)=0$, $\mathcal H_z(0)=-1$, $\mathcal H_{zz}(0)=4+6b$, and $\mathcal H_{zzz}(0)=-6(1+3b\gamma_2)$.
The prescribed $\gamma_2$ and $\kappa$ therefore give
\begin{equation}
 \mathcal K(0)=b\{-1+\kappa(8+6b-6\gamma_2)\}=0\,.
 \label{eq:horizon-constraint-initialization}
\end{equation}
Analyticity and constraint propagation therefore give $\mathcal N=0$ on the core.
On the tail, $A\to\Ai>0$, $C\to\beta$, and all derivative variables decay exponentially; these rates make every polynomially weighted term in $\mathcal N$ vanish at infinity and give $\mathcal E_2=0$.
The fixed $v_0$ monotonicity test then completes the sixth component as in Section~\ref{sec:proof}.
The supplied symbolic script verifies the coordinate identity, propagation identity, horizon initialization, and fixed $v_0$ derivative with exact algebra.

\paragraph{Physical intervals:}
The same exact parameter box gives the mass and temperature enclosures in Section~\ref{sec:geometry}.
For the literal action~\eqref{eq:action}, the dimensionless entropy has the outward enclosure
\begin{equation}
 \begin{gathered}
 \CertifiedEntropyLower <Gm_2^2 S_{\rm lit}\,,\\
 Gm_2^2 S_{\rm lit}<\CertifiedEntropyUpper\,.
 \end{gathered}
 \label{eq:entropy-interval}
\end{equation}
The last place rounding is included in these endpoints.
The shifted convention is obtained by adding $2\pi$ to $Gm_2^2S_{\rm lit}$, with $\pi$ enclosed by the same rational Machin formula calculation.

\section{Ancillary files and reproducibility}
\label{app:reproducibility}

The ancillary archive contains the complete sources, interval arithmetic header, rational weight schedule, finite tail shape data, build instructions, and output logs used for this revision.
The code in \texttt{src/verified\_core\_taylor\_model.repaired.cpp} is the production core program.
Its interval endpoint serialization is outward, and its sensitivity tubes and recentering errors use the enclosing interval coefficients and distances from the actually stored centers.
The production source is formatted by computational stage, uses descriptive function names, and cites the corresponding equation labels in Appendix~\ref{app:core}.
Unused development helpers have been removed; the deterministic transformation and lexical source consistency check are documented in \path{ancillary/patches/READABILITY.md}.
The sensitivity tube check rejects equality at either boundary, as required by the strict inclusion in~\eqref{eq:core-sensitivity-tube}.
The archive records the source changes and the provenance of the newly constructed interval header and weight schedule.
The archived original core source in \texttt{reference/} is an input to the deterministic source consistency check.
Historical output intervals in that directory are comparison data and do not enter the regenerated match.

\begin{samepage}
The numerical pipeline is run from the ancillary directory by
\begin{verbatim}
make verify
\end{verbatim}
\end{samepage}
This builds and tests the interval arithmetic and binary128 conversion support, regenerates the horizon and compact core enclosures, checks the initial weighted second variation bound, and runs the rational horizon, tail, constraint monotonicity, and matching certificates.
The matching program receives the regenerated core file explicitly.
A failed check terminates the pipeline.
The output logs identify the file actually used and report the bootstrap ratios, positivity bounds, and matching margins.

The reported run used macOS arm64, GNU GCC 15.2.0, MPFR 4.2.2, GMP 6.3.0, GNU Make 3.81, and Python 3.14.7.
The build uses GNU C++17, \texttt{-O2 -fno-fast-math -ffp-contract=off}, MPFR, GMP, and libquadmath.
The supplied binary128 conversion implementation covers the tested platform's missing MPFR conversion entry points.
Its tests include exact round trips, directed conversions of both signs, subnormal boundaries, and ties.
The interval tests compare arithmetic and decimal serialization with exact rational extrema.

The positive component weights are freely chosen rational constants.
The core recomputes its logarithmic norm bounds with those weights and accounts for changes of norm between steps.
Thus the approximate backward orbit and numerical eigenvectors used by \texttt{make weights} only select an efficient schedule.
The proof checks do not assume that this approximate orbit encloses the solution.
A change of schedule requires a new \texttt{make verify} run.

Likewise, the finite tail shape is stored as exact decimal rational data.
Its complete residual and the uniform fixed point estimates are checked by \path{python/tail/centered_tail_uniform_certificate.py}.
The separate \texttt{mpmath} recurrence and shape generation programs are exploratory construction tools; they are not used to justify residual or rounding bounds.
The numerical rational checks use Python's standard library; the full pipeline also uses SymPy for exact algebra.
The latter checks can be run separately as
\begin{verbatim}
python3 python/symbolic/verify_constraints.py
\end{verbatim}
They verify the conserved constraint, its coordinate identification, the horizon constraint, and the fixed $v_0$ scalar derivative; the recorded run used SymPy 1.14.0.
The parameter and thermodynamic intervals and the outward face table are regenerated with
\begin{verbatim}
python3 python/matching/generate_paper_data.py
\end{verbatim}
This program encloses $\pi$ by alternating rational arctangent sums and Machin's identity, so the displayed physical bounds also have a rational rounding certificate.
The entropy bounds refer to the literal action~\eqref{eq:action}, including its Euler density contribution.
The complete package contains every local helper required by the proof.
The public repository containing the sources, data, build instructions, tests, and verification logs is available at \href{\PublicRepositoryURL}{\nolinkurl{\PublicRepositoryURL}}.

\end{document}